\providecommand{\tabularnewline}{\\}
\numberwithin{equation}{section}
\numberwithin{figure}{section}
\theoremstyle{definition}
\newtheorem{defn}{\protect\definitionname}
\theoremstyle{plain}
\theoremstyle{plain}
\newtheorem{prop}{\protect\propositionname}
\theoremstyle{plain}
\newtheorem{thm}{\protect\theoremname}
\theoremstyle{remark}
\newtheorem{rem}{\protect\remarkname}
\theoremstyle{plain}
\newtheorem{lem}{\protect\lemmaname}
\theoremstyle{plain}
\newtheorem{cor}{\protect\corollaryname}
\newtheorem{fact}{Fact}
\definecolor{green}{RGB}{00, 180, 00}
\definecolor{red}{RGB}{180, 00, 00}
\providecommand{\corollaryname}{Corollary}
\providecommand{\definitionname}{Definition}
\providecommand{\lemmaname}{Lemma}
\providecommand{\propositionname}{Proposition}
\providecommand{\remarkname}{Remark}
\providecommand{\theoremname}{Theorem}
\def\t{\theta}
\def\D{\Delta}
\def\e{\epsilon}
\begin{document}
\title{{\large{}Sampling Dynamics and Stable Mixing in Hawk--Dove Games\\}}
\author{Srinivas Arigapudi\thanks{
Faculty of Industrial Engineering and Management, Technion. \protect\href{mailto:arigapudi@campus.technion.ac.il}{arigapudi@campus.technion.ac.il}.}
\and Yuval Heller\thanks{Department of Economics, Bar-Ilan University.  \protect\href{mailto:yuval.heller@biu.ac.il}{yuval.heller@biu.ac.il}. }
\and Amnon Schreiber\thanks{Department of Economics, Bar-Ilan University.  \protect\href{mailto:amnon.schreiber@biu.ac.il}{amnon.schreiber@biu.ac.il}. }
\thanks{We thank Daniel Friedman, Itay Kavaler, Ron Peretz, Ernst Schulte-Geers, Daniel Stephenson, Jiabin Wu, and seminar audiences at Bar-Ilan University and the University of Haifa for various helpful comments. We thank  Luis R. Izquierdo and Segismundo S. Izquierdo for their help in implementing the simulations using ABED software. YH and SA  gratefully acknowledge the financial support of the European Research
Council (\#677057) and the Israeli Science Foundation (\#2443/19). SA is supported in part at the Technion by a Fine Fellowship.
}}
\maketitle
\begin{abstract}
The hawk--dove game admits two 
types of equilibria: an asymmetric pure equilibrium in which players in one population play “hawk'' and players in the other population play ``dove,'' and an 
inefficient symmetric  mixed equilibrium, 
in which hawks are frequently matched against each other.
The existing  
literature shows that populations will converge to playing one of the 
pure equilibria  from almost any initial state. By contrast, we show that plausible sampling dynamics, in which agents occasionally revise their actions by observing either opponents' behavior or payoffs in a few past interactions, can induce the opposite result:  
convergence to one of the inefficient mixed stationary states.

\noindent \textbf{Keywords:} Chicken game, learning, evolutionary stability,  payoff sampling dynamics, action sampling dynamics, binomial distributions. \textbf{JEL codes: }C72, C73.
\end{abstract}

\section{Introduction}
The hawk--dove game is 
widely used to study situations of conflict between strategic participants.\footnote{
A few examples of the applications of the hawk--dove game 
(also known as the chicken game; see, e.g., \citealp{rapoport1966game,aumann1987correlated}) are: provision of public goods (\citealp{lipnowski1983voluntary}), nuclear deterrence between superpowers (\citealp{brams1987threat};  \citealp{dixit2019we}), industrial disputes (\citealp{bornstein1997cooperation}), bargaining problems (\citealp{brams2001fallback}), conflicts between countries over contested territories (\citealp{baliga2012strategy,baliga2020strategy}), and task allocation among members of a team (\citealp{herold2020evolution}).}
As a simple motivating example,
consider a situation in which a buyer (Player 1) and a seller (Player 2) bargain over the price of an asset (e.g., a house). Each player has two possible bargaining strategies (actions): insisting on a more favorable price (referred to as being a ``hawk''), or agreeing to a less favorable price in order to close the deal (being a ``dove''). The payoffs of the game are presented in Table \ref{tab:canonical-Hawk--dove-Games-1}. 
Two doves agree on a price that is equally favorable to both sides, and obtain a relatively high payoff, which is normalized to 1. A hawk obtains a favorable price when being matched with a dove, which yields her an additional gain of $g\in(0,1)$, at the expense of her dovish opponent.\footnote{
Our formal model studies a broader class of generalized hawk--dove games, in which the gain of a hawkish player might differ from the loss of the dovish opponent (as shown in Table \ref{tab:Generalized-Hawk--dove-Games} in Section \ref{sec:model}).} 
 Finally, two hawks obtain the lowest payoff of 0, due to a substantial probability of bargaining failure.\footnote{
Our one-parameter payoff matrix is equivalent to the commonly used two-parameter payoff matrix (\citealp{smith1982evolution}), in which a dove obtains $\frac{V}{2}$ against another dove and 0 against a hawk, and a hawk obtains $\frac{V-C}{2}$ against another hawk and $V$ against a dove. Specifically, our one-parameter matrix is obtained from the two-parameter matrix by the affine transformation of adding the constant $\frac{C-V}{2}$ and dividing all payoffs by $\frac{C}{2},$ followed by substituting $g\equiv\frac{V}{C}.$}
 Observe that large values of $g$ (close to 1) correspond to environments that are advantageous to hawks (i.e., being a hawk yields a higher expected payoff against an opponent who might play either action with equal probability), small values of $g$ correspond to environments that are advantageous to doves, and values of $g$ that are close to 0.5 correspond to approximately balanced environments.

\begin{table}
\begin{centering}
\caption{Payoff Matrix of the Standard Hawk--Dove Game ($g\in\left(0,1\right)$) \label{tab:canonical-Hawk--dove-Games-1}}
\par\end{centering}
\smallskip{}
\centering
    \setlength{\extrarowheight}{2pt}
    \begin{tabular}{cc|c|c|}
      & \multicolumn{1}{c}{} & \multicolumn{2}{c}{\textcolor{red}{Player 2}}\\
      & \multicolumn{1}{c}{} & \multicolumn{1}{c}{$\textcolor{red}{h}$}  & \multicolumn{1}{c}{$\textcolor{red}{d}$} \\\cline{3-4}
      \multirow{2}*{\textcolor{blue}{Player 1}}  & $\textcolor{blue}{h}$ & $\textcolor{blue}{0},\textcolor{red}{0}$ & $\textcolor{blue}{1+g},\textcolor{red}{1-g}$ \\\cline{3-4}
      & $\textcolor{blue}{d}$ & $\textcolor{blue}{1-g},\textcolor{red}{1+g}$ & $\textcolor{blue}{1},\textcolor{red}{1}$ \\\cline{3-4}
    \end{tabular}
  \end{table}

The hawk–dove game admits three Nash equilibria: two asymmetric pure equilibria, and 
an inefficient symmetric mixed equilibrium. In the pure equilibria (in which one of the players plays hawk while the opponent plays dove), all conflicts are avoided at the cost of inequality, as the payoff of the hawkish player is substantially higher than that of the dovish opponent. By contrast, in the symmetric mixed equilibrium 
both players obtain the same expected payoff,
yet this payoff is relatively low due to the positive probability of a conflict arising between two hawks. 

A natural question is to ask which equilibrium is more likely to obtain. Standard game theory is not helpful in answering this question, as all these Nash equilibria satisfy all the standard refinements (e.g., perfection). By contrast, the dynamic (evolutionary) approach can yield sharp predictions (for textbook expositions, see \citealp{weibull1997evolutionary,sandholm2010population}).

\paragraph{Revision dynamics}

Consider a setup in which pairs of agents from two infinite populations are repeatedly matched at random times (each such match of an agent from population 1 is against a new opponent from population 2).\footnote{Our paper focuses on two-population dynamics in which players condition their play on their role in the game. By contrast, in a one-population model, players cannot condition their play on their role in the game. The predictions of the one-population model are briefly discussed in Remark \ref{rem-two-one}.} Agents occasionally die (or, alternatively, agents occasionally receive opportunities to revise their actions). New agents observe some information about the aggregate behavior and the payoffs, and use this information to choose the action they will play in all future encounters. We are interested in characterizing the stable rest points of such revision dynamics, which can be used as an equilibrium refinement.

Most existing models assume that the revision dynamics are \emph{monotone} (also known as sign-preserving) with respect to the payoffs: the frequency of the strategy that yields the higher payoff (among the two feasible strategies) increases. A key result in evolutionary game theory is that in a hawk--dove game, all monotone
(two-population) revision dynamics converge to the asymmetric pure equilibria  from almost any initial state (henceforth, \emph{global convergence}; see \citealp{smith1976logic}, for the classic analysis, \citealp{smith1982evolution}, for the textbook presentation, \citealp{sugden1989spontaneous}, for the economic implications, and \citealp{oprea2011separating}, for the general dynamic result.)  Thus, the existing literature predicts that an efficient convention 
will emerge in which trade always occurs and most of the surplus goes to one side of the market. Casual observation suggests that this prediction might not fit well the behavior in situations such as the motivating example, in which the surplus of trade is typically divided relatively equally between the two sides of the market, and in which bargaining frequently fails.

In many applications, precise information about the aggregate behavior in the population  may be difficult or costly to obtain. In such situations, new agents have to infer the aggregate behavior in the population from a small sample of other players. 
In what follows, we 
study two plausible inference procedures, both of which violate monotonicity. The first procedure is the \emph{action-sampling dynamics} (also known as sampling best-response dynamics;  \citealp{sandholm2001almost,osborne2003sampling}). In these dynamics, each new agent observes the behavior of $k$ random opponents, and then adopts the action that is a best reply to her sample (with an arbitrary tie-breaking rule). 

In some applications, new agents may not be able to observe  opponents' actions, or they may lack information about the payoff matrix. Plausible revision dynamics in such situations are the \emph{payoff-sampling dynamics}  (also known as best experienced payoff dynamics; \citealp{osborne1998games,sethi2000stability}). 
In these dynamics, each new agent observes the payoffs obtained by incumbents of her own population in $k$ interactions in which these incumbents played hawk, and in $k$ interactions in which these incumbents played dove. Following these observations, the new agent adopts the action that yielded the higher mean payoff.

We analyze both sampling dynamics in the hawk--dove game. In our analysis, we allow agents to have heterogeneous sample sizes (i.e., each new agent is endowed with a sample size of $k$ that is randomly chosen from an exogenous distribution). 
It is relatively straightforward to show that these dynamics admit at least three stationary states: 
two asymmetric pure states, and an inefficient symmetric state.\footnote{
Although the  symmetric stationary state does not coincide with the symmetric Nash equilibrium, they share similar qualitative properties: namely, symmetry between the two populations, and inefficiency induced by frequent matching of two hawks.}$^{,}$
\footnote{In some cases, the dynamics admit also asymmetric mixed states,
as is demonstrated in Section \ref{sec:Numeric-Analysis}.} 

\paragraph{Global convergence to mixed states} 
We show that sampling dynamics can yield qualitatively different results compared to monotone dynamics in the hawk--dove game. Specifically, our first main result (Theorem \ref{thm:global-mixed} and Corollary \ref{cor:g=l}) presents a simple condition for sampling dynamics to induce the opposite result (relative to monotone dynamics), namely, convergence to one of the mixed stationary states from any interior state. Roughly speaking, this happens iff the following three conditions hold:
 (1) sufficiently many agents have small sample sizes, (2) 
the population includes some agents with relatively large sample sizes (though still below a threshold that is increasing in $|g-0.5|$), and (3) $g$  is not too close to 0.5. 

The key to this result is to characterize when populations that start near one of the pure equilibria get away from this equilibrium.
Assume that initially almost all ($1-\e$) buyers are hawks, and almost all ($1-\e$) sellers are doves. Consider a new agent who bases her decision  on a sample of $k$ actions. The sample will have a single occurrence of the rare action with a probability of about $k\cdot \epsilon$ (while the probability of having two or more occurrences of the rare action is negligible). The presence of agents with small samples is necessary for moving away from the pure equilibrium, because for dove-favorable games with $g<0.5$ a single occurrence of a rare action can induce new sellers to be hawks only if their samples are small 
(similarly, only new buyers  with small samples  change their behavior after a single occurrence of a rare action in hawk-favorable games with $g>0.5$).

In dove-favorable games, a single occurrence of a rare action can induce a new buyer to be a dove  even for 
relatively large sample sizes up to a threshold, 
where this threshold is larger the farther $g$ is from 0.5 
(and the same holds for inducing new sellers to be hawks in hawk-favorable games). Having a larger sample size $k$ (up to the above-mentioned threshold) helps to increase the frequency in which rare actions are observed,  which explains why having a sufficiently large expected sample size (Condition (2)) is necessary for moving away from the pure equilibrium. Finally, when the game is balanced between the two actions (i.e., $g$ is close to 0.5), a single occurrence of a rare action can change the behavior of a new agent (in both populations) only if her sample is relatively small, but in this case, the total frequency in which rare actions occur in the samples might be too small. This is why having a less balanced game 
(i.e., condition (3) of $g$ not being too close to 0.5) induces the population to move away from a pure equilibrium.

\paragraph{Heterogeneous populations and stable symmetric mixed states} 
Our next result focuses on the action-sampling dynamics. We show that the behavior of the population is qualitatively different for homogeneous populations (in which all agents have the same sample size) than for heterogeneous populations. Specifically, we show (Theorem \ref{thm:unstableinterior_ASD}) that all homogeneous populations (with a sample size of at least 2) converge from almost all initial states to one of the pure states under the action-sampling dynamics. By contrast, Theorem \ref{thm:stableinterior_ASD} shows that the symmetric stationary  mixed state (in which both populations obtain the same expected payoff, and bargaining frequently fails) is asymptotically stable in a large class of heterogeneous populations in which some agents have relatively small samples, while the remaining agents have large samples. Interestingly, \emph{the symmetric mixed state is asymptotically stable even in populations in which most agents perfectly observe the opposing population's true distribution of play, while a small minority relies on small samples} (provided that $g$ is sufficiently far from 0.5).

The argument that the symmetric stationary state  is unstable in homogeneous  populations and stable in heterogeneous populations is sketched as follows. A symmetric stationary state under the action-sampling dynamics is characterized by solving the equation $p=\Pr(X(k,p)\leq{m})$, where $X(k,p)$ is a binomial random variable and $m\in \{0,1,\dots,k-1\}$. This is so because the rate of hawks ($p$) among the new agents is equal to the probability that the number of the opponents' hawkish actions observed by a new agent (which has a binomial distribution) is below some threshold ($m$). One can show that this solution is close to the peak of the unimodal density of $X(k,p),$ which implies that a small perturbation in $p$ has a strong effect on $\Pr(X(k,p)\leq{m})$, and thus it substantially changes the new agents' behavior, which takes the population away from the symmetric state. By contrast, in heterogeneous populations the stationary state falls in between the peaks of the two distributions, which implies that small perturbations have a weak effect on the agents' behavior.

Our final result (Theorem \ref{thm:stableinterior_PSD}) shows that the conditions for the stability of the symmetric mixed state are broader under the payoff-sampling dynamics, and that populations with relatively small sample sizes (either homogeneous or heterogeneous) can have an asymptotically stable symmetric mixed state.

Taken together, our results show that when 
some agents have limited information about the aggregate behavior of the population (which seems plausible in various real-life applications, such as the motivating example of buyers and sellers of houses), then an egalitarian, yet inefficient, convention may arise in which bargaining frequently fails. 

 Our proofs rely on deriving new properties of binomial distributions, which may be of independent interest. One 
 such new property (Proposition \ref{pro-binomial}) is that  there do
not exist any two different probabilities $p\neq q\in(0,1)$ such that the probability of obtaining at most $m$ successes out of $k$ trials is $q$ when the success probability in each trial is $p$, and at the same time, the probability of obtaining at most $m$ successes out of $k$ trials is $q$ when the success probability in each trial is $p$.

\paragraph{Structure} 
Section \ref{related-literature} presents the related literature. Our model is described in Section \ref{sec:model}. Section \ref{sec:local stability pure states} presents a ``complete'' characterization for global convergence to one of the mixed stationary states. In Section \ref{sec-conv-to-pure} we show that homogeneous populations converge to pure stationary states under the action-sampling dynamics.  Section \ref{sec:local stability mixed states} presents various sufficient conditions for the  stability of the  symmetric stationary state.
The analytic results of the paper are supplemented by a numeric analysis in Section \ref{sec:Numeric-Analysis}.
We conclude in Section \ref{sec:conclusion}. Formal proofs are presented in the appendix. 
Appendix \ref{sub:binomial-proofs} describes our results for binomial distributions.

\section{Related Literature}\label{related-literature}
\paragraph{Related theoretical literature}

The action-sampling dynamics were pioneered by 
  \cite{sandholm2001almost} and \cite{osborne2003sampling}. \cite{oyama2015sampling} applied these dynamics to prove global convergence results in supermodular games. Recently, \cite{heller2018social} studied the conditions on the expected sample size 
    that implies global convergence for all payoff functions and all sampling dynamics.\footnote{   \cite{hauert2018effects} use the term ``sampling dynamics'' to refer to a variant  of the replicator dynamics, in which when an agent samples another agent and mimics the other agent's behavior, it is more likely that these two agents will be matched with each other. This is less related to our use of the notion of ``Sampling dynamics'', which is in line with the literature cited in the main text.}

  \cite{salant2020statistical} (see also \citealp{sawa2021statistical}) generalized the action-sampling dynamics by allowing new agents to use various procedures to infer from their samples the aggregate behavior of the opponents 
  (in addition to allowing for payoff heterogeneity in the population).  \citeauthor{salant2020statistical} pay special attention to \emph{unbiased} inference procedures in which the agent's expected belief about the share of opponents who play hawk coincides with the sample mean. Examples of unbiased procedures are maximum likelihood estimation, beta estimation with a prior representing complete ignorance, and a truncated normal posterior around the sample mean. In our setup, the payoffs are linear in the share of agents who play hawk, which implies that the agent's perceived best reply depends only on the expectation of her posterior belief. This implies that our results hold for any unbiased inference procedure.
  
  The present paper, similar to the papers cited above, studies deterministic dynamics in infinite populations. When there is convergence to a stable stationary state in such dynamics, the convergence is fast
  (\citealp{oyama2015sampling}).
  By contrast, stochastic evolutionary models (see, e.g., the seminal contribution of \citealp{young1993evolution}, 
  and the recent hawk--dove application in \citealp{Bilancini-et-al-2021}), which are also based on revising agents observing a finite sample of opponents' actions,  focus on the very long-run behavior of stochastic processes when players’ choice rules include the possibility of rare “mistakes” 
  (sufficient conditions for stochastic evolutionary models to yield fast convergence are studied in \citealp{kreindler2013fast,arieli2020speed}).
 
The payoff-sampling dynamics were pioneered by \cite{osborne1998games} and \cite{sethi2000stability} and later generalized in various respects by \cite{sandholm2020stability}. It has been used in a variety of applications, including price competition with boundedly rational consumers (\citealp{spiegler2006competition}), common-pool resources (\citealp{cardenas2015stable}), contributions to public goods (\citealp{mantilla2018efficiency}), centipede games (\citealp{sandholm2019best}), finitely repeated games (\citealp{Raj}) and the prisoner's dilemma (\citealp{arigapudi2020instability}).
The existing literature assumes that all agents have the same sample size. 

A methodological contribution of the present paper is in extending the setup of payoff-sampling dynamics to analyze heterogeneous populations in which new agents differ in their sample sizes, and this heterogeneity leads to 
qualitatively new 
results.

It is well known that mixed stationary states in  multiple-population games cannot be asymptotically stable under the commonly used replicator dynamics (see, e.g.,  \citealp[Theorem 9.1.6]{sandholm2010population}).
 By contrast, we show that both classes of sampling dynamics, which are plausible in various real-life applications, can induce asymptotically stable mixed stationary states in a two-population hawk--dove game.

\paragraph{Related experimental literature}
\cite{selten2008stationary} experimentally tested the predictive power of various solution concepts in two-action, two-player games with a unique completely mixed Nash equilibrium. They show that 
both the payoff-sampling equilibrium and the action-sampling equilibrium outperform the predictions of both the Nash equilibrium and the quantal-response equilibrium.

Recently, \cite{stephenson2019coordination} tested the predictive validity of various evolutionary models in coordinated attacker–defender games.\footnote{Experiments that directly test the dynamic predictions of evolutionary game theory are quite scarce. Two notable exceptions are the experiments showing the good fit of the dynamic predictions in the rock--paper--scissors game (\citealp{cason2014cycles,hoffman2015experimental}).} \citeauthor{stephenson2019coordination}'s experimental design is very favorable for monotone dynamics because each participant is shown the exact (population-dependent) payoff that would be obtained by each action at each point in time. Nevertheless, subjects frequently violate monotonicity: 10\%--20\% of the subjects switch from higher-performing strategies to lower-performing strategies.

The key prediction of monotone dynamics for hawk–dove games (in which agents from one population are randomly matched with agents from another population) is experimentally tested in \cite{oprea2011separating} and \cite{benndorf2016equilibrium}. Both experiments apply an interface that is favorable to monotonicity (i.e., each participant is shown the exact population-dependent payoff of each action). Both experiments show that the prediction of monotone dynamics holds in this setup, and that the populations converge to an asymmetric pure equilibrium in which one population (say, the buyers) plays hawk and the other population (say, the sellers) plays dove. 

Consider a revised experimental design, where an agent observes only the behavior of her own opponent, rather than the aggregate behavior of the opposing population. An interesting testable prediction of our model is that in this experimental design, the populations are likely to converge to the symmetric stationary state in the relevant parameter domain 
(in particular, when $g$ is not too far from 1; see Figure
\ref{fig:numeric_figure}).\footnote{\cite{benndorf2016equilibrium, benndorf2021games} 
studied a more general setup in which each participant in each round is randomly matched with an opponent from the other population with probability $\kappa \in (0,1)$, and is randomly matched with an opponent from her own population with the remaining probability $1-\kappa$. Our theoretical  predictions fit the setup of $\kappa$ close to one.}

\section{Model}\label{sec:model}

\subsection{The Hawk--Dove Game}
Let $G=\{A,u\}$ denote a symmetric two-player hawk--dove game, where: 
\begin{enumerate}
\item $A=\{h,d\}$
is the set of actions of each player, and 
\item $u:A^2\rightarrow\mathbb{R}$
is the payoff function of each player.
 \end{enumerate}
Let $i\in\left\{ 1,2\right\} $ be an index referring to one of the
players, and let $j=\left\{ 1,2\right\} \backslash\left\{ i\right\} $
be an index referring to the opponent. We interpret action $h$
as the hawkish (more aggressive) action and $d$ as the dovish
action. The payoff matrix $u(\cdot,\cdot)$ of a generalized hawk--dove game is given in Table \ref{tab:Generalized-Hawk--dove-Games}.
When both agents are dovish, they obtain a relatively high payoff, which is normalized to 1. When both agents are hawkish, they obtain their lowest feasible payoff, which is normalized to 0. 
Finally, when one of the players is hawkish and her opponent is dovish, the hawkish player \emph{gains}  $g\in(0,1)$ (relative to the payoff 1 obtained by two dovish players), while her dovish opponent \emph{loses}\footnote{\cite{herold2020evolution} allow a broader 
domain in which the assumption of $g,l\in(0,1)$ is replaced with the weaker assumption of
$g>0$, $l<1,$ and $l+g>0$. All of our results hold in this extended setup.
} $l\in(0,1)$. 
The game admits three Nash equilibria: two asymmetric pure Nash equilibria: $(h,d)$ and $(d,h),$ and a symmetric mixed Nash equilibrium in which each player plays $h$ with probability $\frac{g}{1+g-l}$,
and obtains a relatively low expected payoff of $\frac{(1+g)(1-l)}{1+g-l}<1$.

\begin{table}
\begin{centering}
\caption{Payoff Matrix of a Generalized Hawk--Dove Game $g,l\in\left(0,1\right)$ \label{tab:Generalized-Hawk--dove-Games}}
\par\end{centering}
\smallskip{}
\centering
    \setlength{\extrarowheight}{2pt}
    \begin{tabular}{cc|c|c|}
      & \multicolumn{1}{c}{} & \multicolumn{2}{c}{\textcolor{red}{Player 2} \vspace{-0.5em}}\\
      & \multicolumn{1}{c}{} & \multicolumn{1}{c}{$\textcolor{red}{h}$}  & \multicolumn{1}{c}{$\textcolor{red}{d}$} \\\cline{3-4}
      \multirow{2}*{\textcolor{blue}{Player 1}}  & $\textcolor{blue}{h}$ & $\textcolor{blue}{0},\textcolor{red}{0}$ & $\textcolor{blue}{1+g},\textcolor{red}{1-l}$ \\\cline{3-4}
      & $\textcolor{blue}{d}$ & $\textcolor{blue}{1-l},\textcolor{red}{1+g}$ & $\textcolor{blue}{1},\textcolor{red}{1}$ \\\cline{3-4}
    \end{tabular}
  \end{table}
An important special 
subclass  is
the \emph{standard} hawk--dove games, 
in which $g=l$ (see Table \ref{tab:canonical-Hawk--dove-Games-1}); i.e., the gain of the hawkish player is equal to the loss of her dovish opponent.
\subsection{Evolutionary Process}

We assume that there are two unit-mass continuums of agents 
(e.g., buyers and sellers) and that
agents in population 1 are randomly matched with agents in population
2. Aggregate behavior in the populations
at time $t\in\mathbb{R}^{+}$ is described by a \emph{state}
$\mathbf{p}\left(t\right)=\left(p_{1}\left(t\right),p_{2}\left(t\right)\right)\in\left[0,1\right]^{2}$,
where $p_{i}\left(t\right)$ represents the share of agents playing
the hawkish action $h$ at time $t$ in population $i$. We extend
the payoff function $u$ to states (which have the same
representation as mixed strategy profiles) in the standard linear
way. Specifically, $u(p_i,p_j)$ denotes the average payoff of population $i$ (in which a share $p_i$ of the population plays $h$)   
when randomly matched against population $j$ (in which a share $p_j$  plays $h$).
With a slight abuse of notation, we use $d$ (resp., $h$)
to denote a degenerate population in which all of its agents play action $d$ (resp., $h$).
A state $\mathbf{p}=\left(p_{1},p_{2}\right)$ is \emph{symmetric} if $p_{1}=p_{2}$. 
A state $\mathbf{p}=\left(p_{1},p_{2}\right)$ is 
\emph{mixed} (or \emph{interior}) if $p_1,p_2\in(0,1).$ 

Agents occasionally die and are replaced by new agents (equivalently,
agents occasionally receive opportunities
to revise their actions).
Let $\delta>0$ denote the death rate
of agents in each population, which we assume to be independent of
the currently used actions. It turns out that $\delta$ does not have any effect on the dynamics,
except to multiply the speed of convergence by a constant. 
The evolutionary process is represented by a continuous function
$\textbf{w}:\left[0,1\right]^{2}\rightarrow\left[0,1\right]^{2}$, which describes
the frequency of new agents in each population who adopt action $h$ as a function of the current state. That is, $w_{i}\left(\mathbf{p}\right)$
describes the share of new agents of population $i$ who adopt action
$h$, given state $\mathbf{p}$. 
Thus, the instantaneous change in the share of agents of population $i$ that play hawk is given by 
$\dot{p}_{i}=\delta\cdot\left(w_{i}\left(\mathbf{p}\right)-p_{i}\right).\label{eq:general-dyanmics}$

\begin{rem}\label{rem-two-one}
Our \emph{two-population} dynamics fit situations in which each player can condition
her play on her role in the game (being Player 1 or Player 2). Common examples of such situations are (1) when sellers are matched with buyers, as in the motivating example, and (2) when each player observes if she has arrived slightly earlier or slightly later at a contested resource (\citealp{smith1982evolution}).
The two-population dynamics essentially induce the same results as one-population dynamics over a larger game with  $2\times2$ “role-conditioned” actions 
(see, e.g., \citealp{weibull1997evolutionary}, end of Section 5.1.1): being a hawk in both roles, being a dove in both roles, being a hawk as Player 1 and a dove as Player 2 (the “bourgeois” strategy of \citealp{smith1982evolution}), and being a hawk as Player 2 and a dove as Player 1.

By contrast, in \emph{one-population} dynamics 
of the original two-action game, an agent cannot condition her play on her role. It is well known that all monotone one-population dynamics converge to the unique mixed Nash equilibrium in hawk--dove games 
(see, e.g., \citealp[Section 4.3.2]{weibull1997evolutionary}). It is relatively straightforward to establish that one-population sampling dynamics lead to qualitatively similar results (convergence is to a somewhat different interior state than in
the mixed Nash equilibrium, but the comparative statics with respect to the payoff parameters remain similar).
\end{rem}
\paragraph{Monotone Dynamics}
The most widely studied 
dynamics are those that are monotone with respect to the payoffs.
A dynamic is monotone if the share
of agents playing an action increases iff the action yields a higher
payoff than the alternative
action.\footnote{In games with more than two actions, there are various definitions that capture different aspects of monotonicity. All these definitions coincide for two-action games. In particular, Definition \ref{def-monotone} coincides in two-action games with \citeauthor{weibull1997evolutionary}'s (\citeyear[Section 5.5]{weibull1997evolutionary}) textbook definitions of payoff monotonicity, payoff positivity, sign preserving, and weak payoff positivity.}$^{,}$\footnote{The best-known example of payoff monotone dynamics is the standard
replicator dynamic (\citealp{taylor1979evolutionarily}), which is
given by
$\dot{p}_{i}=w_{i}\left(\textbf{p}\right)-p_{i}=p_{i}\left(u\left(h,p_{j}\right)-u\left(p_{i},p_{j}\right)\right).\label{eq:replicator-dynamics}$}
\begin{defn}\label{def-monotone}
The dynamic $\textbf{w}:\left[0,1\right]^{2}\rightarrow\left[0,1\right]^{2}$
is \emph{monotone} if for any player $i$, any interior $p_{i}\in\left(0,1\right)$,
and any $p_{j}\in\left[0,1\right]$: $\dot{p}_{i}>0\,\,\Leftrightarrow\,\,u\left(h,p_{j}\right)>u\left(d,p_{j}\right).$ 
\end{defn}
\cite{oprea2011separating} showed that under monotone dynamics, from almost any initial state, the populations converge to one of the two asymmetric pure equilibria in which one population always plays
$h$ and the other population always plays $d$ (generalizing the seminal analysis of \citealp{smith1976logic}). 

\subsection{Sampling Dynamics}
In what follows, 
we study two plausible nonmonotone dynamics, in which new agents base their choice on inference from small samples. 

\paragraph{Distribution of sample sizes}

We allow heterogeneity in the sample sizes used by new agents. Let
$\theta\in\Delta\left(\mathbb{Z}_{+}\right)$ denote the distribution
of sample sizes of new agents. A share of $\theta\left(k\right)$
of the new agents have a sample of size $k$. Let
$\text{supp}\left(\theta\right)$ denote the support of $\theta$,  let 
$\max($supp$(\theta))$ 
denote the maximal 
sample size in the support of
$\theta$, and let $\max($supp($(\theta))=\infty$ if $\theta$'s
support is unbounded.
If there exists some $k,$ for which $\theta(k)=1,$ then we use
$k$ to denote the degenerate (homogeneous) distribution $\theta\equiv k$.

\begin{defn}
An \emph{environment} is a tuple $E=(g,l,\theta)$ where $g,l\in (0,1)$ describe the underlying hawk--dove game, and $\t\in\D\left(\mathbb{Z}_{+}\right)$ describes the distribution of sample sizes. 
\end{defn}

\paragraph{Action-sampling dynamics}
The action-sampling dynamics (\citealp{sandholm2001almost}) fit situations in which agents do not know the exact distribution of actions being played in the opponent's population, but know the payoffs of the underlying game. Agents estimate the unknown distribution of actions by sampling a few opponents' actions. Specifically, each new agent with sample size $k$ (henceforth, a $k$-agent) samples
$k$ randomly drawn agents from the opponent's population and then adopts the action that is the best reply  against the sample. 
\emph{To simplify notation, we assume that in case of a tie, the new agent plays action $d$}. Our results are
qualitatively the same for any tie-breaking rule.

Let ${X}(k,p_{j})\sim Bin\left(k,p_{j}\right)$
denote a random variable with binomial distribution with parameters
$k$ (number of trials) and $p_{j}$ (probability of success in each
trial), which is interpreted as the number of $h$-s in the sample. 
Observe that the sum of payoffs of playing action $h$ against the sample is $(1+g)\cdot(k-X(k,p_j))$ and the sum of payoffs of playing action $d$ against the sample is
$(k-X(k,p_j))+X(k,p_j)\cdot(1-l)=k-lX(k,p_j)$. 

This implies that action $h$ is the unique best reply to a sample of size $k$ iff $(1+g)(k-X(k,p_j)>k-lX(k,p_j)\Leftrightarrow \frac{X(k,p_j)}{k}<\frac{g}{1+g-l}$. This, in turn, implies that the action-sampling dynamic in environment $(g,l,\t)$ is given by
\begin{equation}
w_\theta^{A}\left(p_{j}\right)\equiv w_{i}^{A}\left(\textbf{p}\right) = \sum_{k\in \text{supp}\left(\theta\right)}\theta(k)\cdot\Pr\left(\frac{X\left(k,p_{j}\right)}{k}<\frac{g}{1+g-l}\right).\label{eq:action-sampling0dyanmics}
\end{equation}

\paragraph{Payoff-sampling dynamics}
The payoff-sampling dynamics (\citealp{osborne1998games}) fit situations in which agents either do not know the payoff matrix or do not have feedback about the actions being played in the opponent's population. Specifically, a new $k$-agent observes for each of her feasible actions the mean payoff obtained by playing this action in $k$ interactions (with each play of each action being against a newly drawn opponent), and then chooses the action whose mean payoff was highest. One possible interpretation for these observations is that each new $k$-agent tests each of the available actions $k$ times, and then adopts for the rest of her life the action with the highest mean payoff during the testing phase. As above, we assume that a tie induces a new agent to play  action $d$ (and the results are qualitatively the same for any tie-breaking rule). 

We refer to the sample against which action $h$ (resp.,
$d$) is tested as the $h$-sample (resp., $d$-sample). Let $X(k,p_j),Y(k,p_{j})\sim Bin\left(k,p_{j}\right)$
denote two iid random variables with a binomial distribution with parameters $k$ and $p_{j}$. Random variable $X(k,p_j)$ (resp., $Y(k,p_j)$) is interpreted as the number of times in which the opponents have played action $h$ in the $h$-sample (resp., $d$-sample). 
Observe that the sum of payoffs of playing action $h$ (resp., $d$) against its $h$-sample (resp., $d$-sample) is $(1+g)(k-X(k,p_j))$ 
(resp., $k-l Y(k,p_j$)).
This implies that action $h$ has the highest mean payoff  iff
$(1+g)(k-X(k,p_j))>k-lY(k,p_j)\Leftrightarrow (1+g)X(k,p_j)<gk+lY(k,p_j)$. Thus the payoff-sampling dynamic in environment $(g,l,\t)$ is given by
\begin{equation}
w_\theta^{P}\left(p_{j}\right)\equiv w_i^{P}\left(\textbf{p}\right) =\sum_{k\in \text{supp}\left(\theta\right)}\theta(k)\cdot\Pr((1+g)X(k,p_j)<gk+lY(k,p_j)).\label{eq:payoff-smapling-dynamics}    
\end{equation}
Henceforth, we omit the superscript $A$ / $P$ (i.e., we write $w_\theta(p_j)$) when referring to properties that hold for both action-sampling and payoff-sampling dynamics, 
or when it is clear from the context which class of sampling dynamics we are dealing with.

\section{Global Convergence to Mixed States}\label{sec:local stability pure states}
Recall that under monotone dynamics the population converges from almost any initial state to one of the pure equilibria.
In this section, we fully characterize the  conditions for which the opposite result holds under sampling dynamics; i.e., the populations converge from almost any initial state to one of the mixed (interior) stationary states.

\subsection{Analysis of $w(p_1)$ and Preliminary Results}
The dynamic characteristics of both classes of sampling dynamics are closely related to the properties of the polynomial 
$w_\theta(p_1)$, its inverse $w^{-1}_\theta(p_1),$ and their intersection points, which are analyzed in this subsection.

Figure \ref{figure-generic} illustrates the phase plots of the sampling dynamics and the properties of the polynomial $w_\theta(p_1)$ and its inverse $w_\theta^{-1}(p_1)$  for the environment in which $g=l=0.25$ and $\theta\equiv3$. The green solid curve is the polynomial $p_2=w_\theta(p_1)$, which describes the states in which $\dot p_1=0$. The orange dashed curve is the polynomial $p_2=w_\theta^{-1} (p_1).$ 

\begin{figure}
\begin{centering}
 \caption{Illustrative Phase Plots ($g=l=0.25$ and $\theta\equiv3$)  \label{figure1}}
 \includegraphics[scale=0.6]{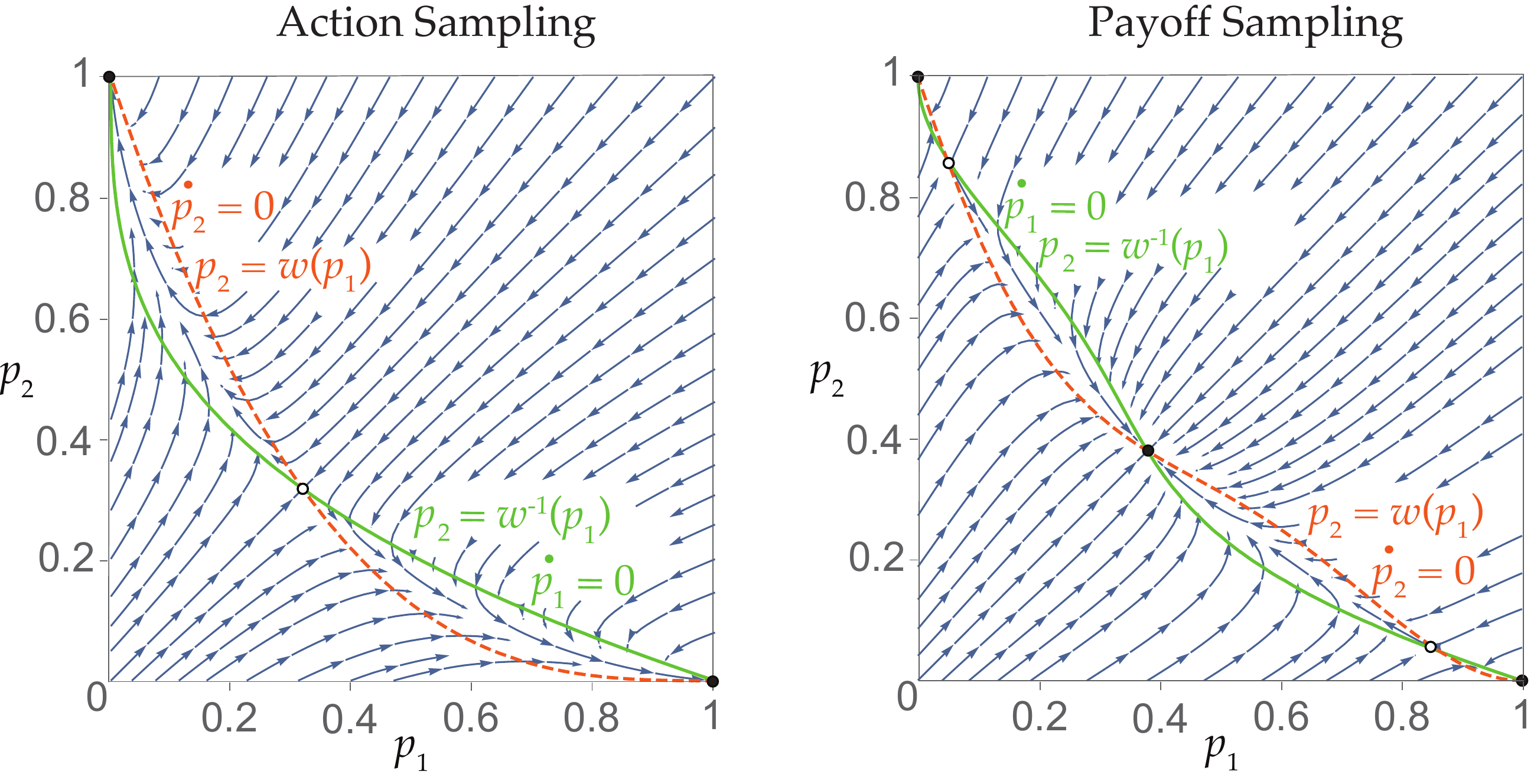}\label{figure-generic}
\par\end{centering}
    {\small{}
   The figure illustrates the phase plots of the action-sampling dynamics (left panel) and the payoff-sampling dynamics (right panel) for the environment  in which $g=l=0.25$ and $\theta\equiv3$. The green solid (resp., orange dashede) curve shows the states for which $\dot p_1=0$. The intersection points of these curves are the stationary states. A solid (resp., hollow) dot represents an asymptotically stable (resp.,  unstable) stationary state.}{\small\par}
\end{figure}

 The following fact is immediate (for both classes of sampling dynamics) from  the basic properties of  binomial random variables.
\begin{fact}\label{fact-w}
$w_\theta(p_j)$ is a 
strictly decreasing polynomial function that satisfies $w_\theta(0)=1$ and $w_\theta(1)=0$.
This implies that the inverse function $w_\theta^{-1}:[0,1]\rightarrow[0,1]$ exists, is continuously differentiable, and that $w_\theta^{-1}(0)=1$ and $w_\theta^{-1}(1)=0.$
\end{fact}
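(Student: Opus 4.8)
The plan is to verify the assertions in turn---the boundary values $w(0)=1$ and $w(1)=0$, strict monotonicity, the polynomial form, and then the statements about $w^{-1}$---treating the two dynamics in parallel where possible and separately where the payoff-sampling case genuinely demands more.

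First, the boundary values and the polynomial form. Since $g,l\in(0,1)$ we have $c^\ast:=\frac{g}{1+g-l}\in(0,1)$ (the numerator is positive, and $c^\ast<1$ is equivalent to $l<1$). For every $k\ge 1$ in $\mathrm{supp}(\theta)$ the variable $X_k(0)$ is degenerate at $0$ and $X_k(1)$ degenerate at $k$; substituting these into (\ref{eq:action-sampling0dyanmics}) and (\ref{eq:payoff-smapling-dynamics}) and using $c^\ast\in(0,1)$, $g>0$, and $1+g>g+l$ (again $l<1$) gives $w(0)=\sum_k\theta(k)\cdot 1=1$ and $w(1)=\sum_k\theta(k)\cdot 0=0$. (This uses $\theta(0)=0$, i.e.\ every agent draws at least one observation; a $k=0$ term would add a constant and spoil $w(0)=1$.) For the polynomial claim, each summand is, for fixed $k$, a finite linear combination of the point masses $\binom{k}{x}p^{x}(1-p)^{k-x}$ in $p:=p_j$ (action-sampling selects $x\le \lceil kc^\ast\rceil-1$; payoff-sampling selects the pairs $(x,y)$ with $(1+g)x<gk+ly$), hence a polynomial; for finite $\mathrm{supp}(\theta)$ all of $w$ is then a polynomial, and for unbounded support it is a locally uniformly convergent power series, so every differentiability argument below goes through verbatim.

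Strict monotonicity for the action-sampling dynamics is immediate: each summand of (\ref{eq:action-sampling0dyanmics}) is the binomial CDF $F(m_k;k,p)=\Pr(X_k\le m_k)$ with $m_k=\lceil kc^\ast\rceil-1\in\{0,\dots,k-1\}$, and the identity $\frac{d}{dp}F(m;k,p)=-k\binom{k-1}{m}p^{m}(1-p)^{k-1-m}$ is strictly negative for all $p\in(0,1)$ and $m\in\{0,\dots,k-1\}$, so $\frac{d}{dp}w^{A}(p)<0$ on $(0,1)$. The payoff-sampling case is where the work lies, because raising $p$ pushes the event $(1+g)X_k<gk+lY_k$ in opposing directions (a larger $h$-sample count $X_k$ hurts $h$, while a larger $d$-sample count $Y_k$ helps $h$). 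To resolve this I would pass to the underlying $2k$ independent trials: let $a_1,\dots,a_k\sim\mathrm{Ber}(p)$ be the hawk-indicators of the $h$-sample and $b_1,\dots,b_k\sim\mathrm{Ber}(1-p)$ the dove-indicators of the $d$-sample, so that substituting $Y_k=k-\sum b_i$ turns the event into $(1+g)\sum a_i+l\sum b_i<(g+l)k$, which is monotone \emph{decreasing} in all $2k$ coordinates. Russo's (Margulis) formula then gives, by symmetry within each block, $\frac{d}{dp}w^{P}(p)=k\big(\Pr(b_1\text{ pivotal})-\Pr(a_1\text{ pivotal})\big)$, where $a_1$ is pivotal exactly when the slack $(g+l)k-\big((1+g)\sum_{i\ge2}a_i+l\sum_i b_i\big)$ lies in $(0,1+g]$ and $b_1$ is pivotal exactly when the corresponding slack lies in $(0,l]$. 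The main obstacle is to show $\Pr(a_1\text{ pivotal})>\Pr(b_1\text{ pivotal})$ for all $p\in(0,1)$: the $a_1$-window has width $1+g$ while the $b_1$-window has width $l<1+g$, so the inequality is intuitively clear, but the two slacks have slightly different laws (one drops an $a$, the other a $b$), and closing this gap---e.g.\ by coupling $\mathrm{Bin}(k-1,p)$ to $\mathrm{Bin}(k,p)$ and comparing the two binomial sums landing in the two windows---is the genuinely delicate step. I expect this quantitative pivotality comparison, not the boundary values or the inverse, to be the crux.

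Finally, the inverse. Strict monotonicity together with continuity and $w(0)=1$, $w(1)=0$ makes $w:[0,1]\to[0,1]$ a continuous strictly decreasing bijection (surjectivity by the intermediate value theorem), so $w^{-1}:[0,1]\to[0,1]$ exists and is continuous, with $w^{-1}(0)=1$ and $w^{-1}(1)=0$ read off from $w(1)=0$ and $w(0)=1$. Since the computations above give $w'(p)\neq 0$ for all $p\in(0,1)$, the inverse function theorem yields $(w^{-1})'(y)=1/w'(w^{-1}(y))$, so $w^{-1}$ is continuously differentiable on the interior. I would note that the one-sided derivatives of $w$ can vanish at $p\in\{0,1\}$ (e.g.\ in the action-sampling case when no $k$ in the support has $m_k\in\{0,k-1\}$), so the cleanest reading of the claim is $C^1$ on $(0,1)$ with continuity up to the endpoints.
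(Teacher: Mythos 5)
Your handling of the boundary values, the polynomial form, the action-sampling monotonicity, and the passage to the inverse is correct and considerably more explicit than the paper, which offers no argument at all: the Fact is simply declared immediate from ``the basic properties of $X_{k}$ and $Y_{k}$ as binomial random variables.'' Your closing caveat is also legitimate: since $w'$ can vanish at $p\in\{0,1\}$ (in the action-sampling case whenever the relevant cutoff $m_k$ avoids $\{0,k-1\}$ for every $k$ in the support), $w^{-1}$ is guaranteed to be continuously differentiable only on the interior, so the statement as written is slightly too strong at the endpoints. (A smaller quibble: for unbounded support with $\sum_k k\,\theta(k)=\infty$ the termwise-differentiation step needs more care, and $w$ is then not literally a polynomial.)

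The genuine gap is exactly where you place it: strict monotonicity of $w^{P}$. Your Margulis--Russo setup is correct and the reduction to $\Pr(a_1\ \text{pivotal})>\Pr(b_1\ \text{pivotal})$ is the right target, but the inequality is left unproved, and it does not follow from the window widths $1+g>l$ alone. Writing $Z=gk+l\,Y_{k-1}-(1+g)X_{k-1}$ for the contribution of the $2k-2$ shared coordinates (with $X_{k-1},Y_{k-1}\sim Bin(k-1,p)$ independent), the difference of the two pivotal probabilities equals
$\Pr\bigl(Z\in(0,1+g-l]\bigr)+(1-2p)\bigl[\Pr\bigl(Z\in(1+g-l,1+g]\bigr)-\Pr\bigl(Z\in(-l,0]\bigr)\bigr]$,
whose sign is self-evident only at $p=\tfrac12$; away from there the claim requires a real comparison of how the lattice of values of $Z$ meets the three windows. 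Since the Fact asserts strict decrease for \emph{both} dynamics, your proposal as written proves it only for the action-sampling dynamics, and the paper gives you nothing to borrow here. One route worth trying: condition on $S=X_k+Y_k\sim Bin(2k,p)$, whose law is stochastically increasing in $p$ while the conditional law of $X_k$ given $S=s$ is hypergeometric and free of $p$; the claim then reduces to showing that $s\mapsto\Pr\bigl((1+g+l)X_k<gk+ls\mid S=s\bigr)$ is nonincreasing, using that the threshold grows by $\tfrac{l}{1+g+l}<\tfrac12$ per unit of $s$ --- though this too needs care at the integer crossings of the threshold. (Note that full strictness of the decrease is cheap once $w'\le 0$ is known: $w'$ is a polynomial that is not identically zero because $w(0)\neq w(1)$, so it has finitely many zeros.)
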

Fact \ref{fact-w} implies that the two curves intersect at $(0,1)$, $(1,0)$ and at a unique symmetric state $(p,p)$. The curves might admit additional intersection points.

Appendix \ref{sub-standard-definitions} presents the standard definitions of  stationary states, asymptotically stable states, and unstable states. Observe that a state is stationary
(i.e., it is a fixed point of the dynamics) iff it is an intersection point of the two curves $w_\theta$ and $w_\theta^{-1}$.

\begin{fact}\label{fact-intersection}
State $(p_1,p_2)$ is stationary iff $w_\theta(p_1)=w_\theta^{-1}(p_1)=p_2$.
\end{fact}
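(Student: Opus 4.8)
The plan is to unwind the definition of a stationary state directly and translate the two scalar equations it produces into a single condition on $w$ and $w^{-1}$. A state $\mathbf{p}=(p_1,p_2)$ is stationary precisely when it is a fixed point, i.e.\ when $\dot p_1 = \dot p_2 = 0$. Because both classes of sampling dynamics satisfy $w_i(\mathbf{p}) = w(p_j)$ --- the inflow of new hawks into population $i$ depends only on the opponent population $j$ --- the flow equations $\dot p_i = \delta\,(w_i(\mathbf{p}) - p_i)$ with $\delta > 0$ reduce the stationarity requirement to the pair $p_1 = w(p_2)$ and $p_2 = w(p_1)$.

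The second step converts the first of these equations into a statement about the inverse. By Fact \ref{fact-w}, $w$ is a strictly decreasing map of $[0,1]$ onto $[0,1]$, hence a bijection with a well-defined inverse $w^{-1}:[0,1]\to[0,1]$. Applying $w^{-1}$ to $p_1 = w(p_2)$ gives the equivalent equation $w^{-1}(p_1) = p_2$. Substituting this together with the second condition $p_2 = w(p_1)$ shows that $\mathbf{p}$ is stationary iff $w(p_1) = p_2 = w^{-1}(p_1)$, which is exactly the asserted characterization (and matches the geometric statement that stationary states are the intersection points of the curves $p_2=w(p_1)$ and $p_2=w^{-1}(p_1)$).

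I do not anticipate a genuine analytic obstacle; the difficulty is entirely one of bookkeeping. The two points demanding care are, first, keeping the opponent indexing consistent so that $\dot p_1$ is governed by $p_2$ and $\dot p_2$ by $p_1$; and second, invoking Fact \ref{fact-w} to ensure that $w$ is an honest bijection, so that the passage between $p_1 = w(p_2)$ and $p_2 = w^{-1}(p_1)$ is a genuine equivalence rather than a one-way implication. Since the strict monotonicity of $w$ holds identically for the action- and payoff-sampling specifications, the argument requires no case split between the two dynamics.
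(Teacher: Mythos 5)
Your argument is correct and matches what the paper intends: the paper presents this Fact as immediate from the definition of a stationary state ($p_i = w(p_j)$ for each $i$) together with the invertibility of $w$ from Fact \ref{fact-w}, which is exactly the unwinding you carry out. Your care with the opponent indexing and with the bijectivity of $w$ is appropriate but introduces nothing beyond the paper's own (implicit) reasoning.
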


Our first result shows that the population converges to one of the stationary states from any initial state.\footnote{
One could also prove Proposition \ref{pro-convergence-to-stationary} by relying on the Bendixson--Dulac theorem (see Theorem 9.A.6 of \citealp{sandholm2010population}). We present a direct proof as the arguments are helpful in the proof of Proposition \ref{pro-convergence-to-pure-iff-asymp}.}
\begin{prop}\label{pro-convergence-to-stationary}
  $\lim_{t\rightarrow\infty}\mathbf{p}\left(t\right)$ exists for any $\textbf{p}(0),$ and it is a stationary state.
\end{prop}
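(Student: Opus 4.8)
The plan is to read the process as an autonomous planar $C^1$ flow on the compact square $S=[0,1]^2$ and to prove convergence by pure phase-plane reasoning. Using $w_i(\mathbf p)=w(p_j)$ and the fact that a single decreasing polynomial $w$ governs both classes of dynamics (Fact \ref{fact-w}), I would write the system explicitly as $\dot p_1=\delta\,(w(p_2)-p_1)$, $\dot p_2=\delta\,(w(p_1)-p_2)$. First I would record existence and uniqueness (the field is polynomial, hence $C^1$) and forward invariance of $S$: on the edge $p_1=0$ one has $\dot p_1=\delta\,w(p_2)\ge 0$, on $p_1=1$ one has $\dot p_1=\delta\,(w(p_2)-1)\le 0$ since $w\in[0,1]$, and symmetrically for $p_2$, so the flow never points out of $S$. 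Thus every forward orbit is bounded, and its omega-limit set is a nonempty, compact, connected, invariant subset of $S$.

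The reusable core of the argument is the \emph{competitive} sign structure $\partial_{p_2}\dot p_1=\delta\,w'(p_2)<0$ and $\partial_{p_1}\dot p_2=\delta\,w'(p_1)<0$. I would exploit it through the two nullclines $N_1=\{p_1=w(p_2)\}$ (where $\dot p_1=0$) and $N_2=\{p_2=w(p_1)\}$ (where $\dot p_2=0$), both strictly decreasing curves running from $(0,1)$ to $(1,0)$ (Fact \ref{fact-w}). They partition $S$ into regions on which the pair of signs $(\mathrm{sgn}\,\dot p_1,\mathrm{sgn}\,\dot p_2)$ is constant, and the competitive pattern forces each nullcline to be crossed in only one direction. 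The plan is to argue from this that every trajectory is eventually monotone in each coordinate; being bounded, it then converges, and its limit must satisfy $\dot p_1=\dot p_2=0$, i.e.\ lie on $N_1\cap N_2$, which by Fact \ref{fact-intersection} is exactly the set of stationary states. This monotone-region bookkeeping is precisely what is refined later to decide \emph{which} stationary state is reached in Proposition \ref{pro-convergence-to-pure-iff-asymp}.

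The step I expect to be the main obstacle is ruling out oscillatory/spiraling behavior, i.e.\ making ``eventually monotone'' rigorous. I see two clean ways to close this gap. The first is the change of variable $q_2=1-p_2$, under which the off-diagonal Jacobian entries become nonnegative, so the system is \emph{cooperative} and hence order-preserving; a standard feature of planar monotone flows is that order-preservation prevents spiraling and makes every bounded orbit eventually monotone, giving convergence. The second, matching the footnote's remark, is that the divergence of the field is the constant $\partial_{p_1}\dot p_1+\partial_{p_2}\dot p_2=-2\delta<0$: Bendixson's criterion then excludes periodic orbits, and the same area-contraction fact excludes heteroclinic polycycles, since for any Jordan curve $\Gamma\subset S$ built from stationary states and connecting orbits the field is tangent to (or vanishes on) $\Gamma$, whence $\oint_\Gamma(\dot p_1\,dp_2-\dot p_2\,dp_1)=0$, contradicting $\iint_{\mathrm{int}\,\Gamma}(-2\delta)\,dA<0$ by the divergence theorem. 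With periodic orbits and polycycles excluded, Poincaré--Bendixson forces each omega-limit set to be a single equilibrium.

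Two subtleties I would handle explicitly. First, to pass from ``the omega-limit set consists of stationary states'' to ``it is a single point,'' I would note that stationary $p_1$ are roots of the polynomial $w(w(p_1))-p_1$, hence isolated unless this polynomial vanishes identically (i.e.\ $w$ is an involution and $N_1=N_2$); a connected omega-limit set of isolated equilibria is a singleton. Second, the degenerate involution case, yielding a continuum of equilibria along $N_1=N_2$, is more gracefully covered by the cooperative-monotonicity route, which still delivers convergence without requiring isolation; I would therefore present the monotone argument as primary and the divergence/Poincaré--Bendixson computation as the backup that also cleanly disposes of recurrence, taking care throughout that the boundary equilibria $(0,1)$ and $(1,0)$ are included in $S$.
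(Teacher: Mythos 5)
Your proposal is correct, but it takes a genuinely different route from the paper's. The paper gives a direct, self-contained phase-plane argument: it classifies every state by its position (above/below/on) relative to the two strictly decreasing nullclines $p_2=w(p_1)$ and $p_2=w^{-1}(p_1)$, reads off the sign pattern of $(\dot p_1,\dot p_2)$ in each of the resulting regions, and shows that a trajectory moves monotonically within a region and can only leave the strip between the curves at a point lying on \emph{both} curves (a crossing of a single curve at a non-intersection point is impossible because the field there is horizontal or vertical in the wrong direction) --- hence the limit is an intersection point, i.e.\ a stationary state by Fact \ref{fact-intersection}. You instead observe the competitive sign structure $\partial_{p_2}\dot p_1,\ \partial_{p_1}\dot p_2<0$, flip it to a cooperative system via $q_2=1-p_2$, and invoke the standard planar monotone-systems theorem (eventual componentwise monotonicity of bounded orbits), with the constant negative divergence $-2\delta$ plus Poincar\'e--Bendixson as a backup; this is essentially the Bendixson--Dulac alternative that the paper's own footnote acknowledges but declines to use. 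What each approach buys: yours is shorter modulo the cited theorems, is more careful about excluding spiraling and polycycles (the paper's verbal ``the trajectory should come from the wrong side'' step is the informal counterpart), and handles non-isolated equilibria gracefully --- a case the paper instead excludes by its genericity assumption (Assumption \ref{asu:generic}(1)), whereas your observation that stationary $p_1$ are roots of the polynomial $w(w(p_1))-p_1$ covers it without that assumption. The paper's elementary bookkeeping, on the other hand, is deliberately chosen because the same region-by-region analysis is reused verbatim to prove Proposition \ref{pro-convergence-to-pure-iff-asymp}; if you adopt your route you would still need some version of that nullcline bookkeeping there.
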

\begin{proof}[Sketch of Proof]
Observe that in any trajectory that begins above (resp., below) both curves, the share of hawks in both populations decreases (increases), which implies that the trajectory moves downward and to the left (resp., upward and to the right) until it intersects with one of the curves. Thus, we can focus on trajectories that begin between the two curves. Assume w.l.o.g. that the initial state is below 
 $w(p_1)$ and above  $w^{-1}(p_1)$ (the argument in the opposite case is analogous). In any such state the share of hawks in population 1 (resp., 2) increases (resp., decreases). This implies that the trajectory moves upward and to the left until meeting one of the curves. This meeting point must be an intersection point of both curves, because if the meeting point were only with $w(p_1)$ (resp., $w^{-1}(p_1)$), the trajectory there would have been  horizontal to the left (vertical upward), which implies that the trajectory would come from the right side of the curve (from below the curve), leading to a contradiction.  See Appendix \ref{proof-convergence-to-stationary} for a formal proof.
\end{proof}
Our second result shows that if any initial state converges to one of the pure equilibria, then this equilibrium must be asymptotically stable (as defined in Appendix \ref{sub-standard-definitions}).
\begin{prop}\label{pro-convergence-to-pure-iff-asymp}
Assume that  $\textbf{p}(0)\neq (1,0)$ and $\lim_{t\rightarrow\infty}\mathbf{p}\left(t\right)=(1,0)$; then $(1,0)$ is asymptotically stable. The same result holds when replacing $(1,0)$ with $(0,1).$
\end{prop}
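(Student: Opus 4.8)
The plan is to reduce everything to the local arrangement, just to the left of $p_1=1$, of the two nullclines $C_1=\{p_2=w^{-1}(p_1)\}=\{\dot p_1=0\}$ and $C_2=\{p_2=w(p_1)\}=\{\dot p_2=0\}$, both of which pass through $(1,0)$ by Fact~\ref{fact-w}. The sign pattern of the field is pinned down by position relative to these curves: $\dot p_1>0$ iff the state lies strictly below $C_1$ (since $p_1<w(p_2)\Leftrightarrow p_2<w^{-1}(p_1)$), and $\dot p_2>0$ iff it lies strictly below $C_2$. First I would record the linearization at the corner. With $\dot p_i=\delta\bigl(w(p_j)-p_i\bigr)$ the Jacobian at $(1,0)$ is $\delta\begin{pmatrix}-1 & w'(0)\\ w'(1) & -1\end{pmatrix}$, whose eigenvalues are $\delta\bigl(-1\pm\sqrt{w'(0)\,w'(1)}\bigr)$; note that $w'(0),w'(1)<0$, so $m:=w'(0)w'(1)>0$ and the eigenvalues are real. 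The trace is $-2\delta<0$, so $(1,0)$ is a stable node when $m<1$ and a saddle when $m>1$, and this is the dichotomy that drives the proof.

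In the case $m<1$ both eigenvalues are negative, so $(1,0)$ is a stable node and hence asymptotically stable within $[0,1]^2$: the flow is inward on both adjacent edges (as $\dot p_1=\delta(w(p_2)-1)<0$ on $\{p_1=1,p_2>0\}$ and $\dot p_2=\delta\,w(p_1)>0$ on $\{p_2=0,p_1<1\}$), so the plane-neighbourhood attraction restricts to the domain, and the conclusion holds outright. In the case $m>1$, $(1,0)$ is a saddle, and I would show that its one-dimensional stable manifold leaves the domain, so that no trajectory other than the constant one converges to $(1,0)$ — making the hypothesis vacuous. Concretely, the eigenvector of the negative eigenvalue $\delta(-1-\sqrt m)$ is proportional to $(w'(0),-\sqrt m)$, both of whose entries are negative; the tangent to the stable manifold at $(1,0)$ therefore points into the forbidden regions $\{p_2<0\}$ and $\{p_1>1\}$, whence $W^{s}\cap[0,1]^2=\{(1,0)\}$ near the corner. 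Any domain trajectory converging to $(1,0)$ has its tail in a small neighbourhood on $W^{s}$, hence equal to $(1,0)$, contradicting $\mathbf p(0)\neq(1,0)$.

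The remaining and most delicate case is the borderline $m=1$, where the linearization has a zero eigenvalue and is silent — this is exactly the one-sample environment $\theta\equiv1$, for which $w(p)=1-p$ is an involution and the whole antidiagonal $\{p_1+p_2=1\}$ is a continuum of rest points. Here I would argue purely in the phase plane, using the eventual monotonicity of trajectories underlying Proposition~\ref{pro-convergence-to-stationary} (a consequence of the competitive structure $\partial\dot p_1/\partial p_2=\delta w'(p_2)<0$, $\partial\dot p_2/\partial p_1=\delta w'(p_1)<0$): the tail of any trajectory converging to $(1,0)$ is monotone in the order $(p_1,-p_2)$ and, since it tends to the corner, is eventually trapped in the closed cell $\{\dot p_1\ge0,\ \dot p_2\le0\}=\{\,w(p_1)\le p_2\le w^{-1}(p_1)\,\}$. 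Because $w-w^{-1}$ is real-analytic near $p_1=1$ (as $w'(0),w'(1)\neq0$ there) and vanishes at $p_1=1$, one of two things holds: either $C_1>C_2$ strictly on a left-neighbourhood, so the lens is nonempty, its interior flow points rightward and downward into the corner, and a small forward-invariant funnel bounded by arcs of $C_1,C_2$ delivers asymptotic stability; or $C_1\le C_2$, so the trapping cell degenerates onto the coincident curve of fixed points, forcing any convergent trajectory to be constant and the hypothesis to be vacuous.

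In every case, whenever a nonconstant trajectory converges to $(1,0)$ we land in the configuration giving asymptotic stability, and the argument for $(0,1)$ is identical by symmetry. The main obstacle is precisely the $m=1$ degeneracy: one must verify that the extreme position of $(1,0)$ as a corner of $[0,1]^2$ is what simultaneously kills spurious convergent trajectories (their stable fibers exit the domain) and, when the lens survives, yields a genuine trapping neighbourhood. I would note that the same phase-plane argument — confinement of the tail to the closed lens cell, plus the sign of $w-w^{-1}$ near the corner — can in fact handle $m<1$ and $m>1$ uniformly, should one prefer to dispense with linearization altogether.
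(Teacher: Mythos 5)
Your argument is correct in substance but takes a genuinely different route from the paper's. The paper's proof never linearizes: it argues purely in the phase plane that a nonconstant trajectory can converge to $(1,0)$ only by eventually entering the cell $\{w(p_1)<p_2<w^{-1}(p_1)\}$ adjacent to the corner, and that the very existence of this cell (i.e., the nullcline ordering $w<w^{-1}$ on a left-neighborhood of $p_1=1$, with no other intersection point in between) supplies a forward-invariant funnel and hence asymptotic stability. You instead compute the Jacobian at the corner and split on $m=w'(0)w'(1)$: for $m<1$ you get asymptotic stability unconditionally (in effect re-deriving part (2) of Proposition \ref{prop:pure-equilbiria-satbility}), and for $m>1$ you show the hypothesis is vacuous because both branches of the stable manifold exit $[0,1]^2$ at the corner --- a clean observation the paper never makes explicit. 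Your route buys a sharper local picture (node versus saddle) and an explicit link to Proposition \ref{prop:pure-equilbiria-satbility}; the paper's route buys uniformity, since it never has to confront the borderline eigenvalue.

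Three repairs. First, $w'(0)$ and $w'(1)$ need not be strictly negative: for action sampling with $\theta\equiv 3$ and $g=l=0.25$ one has $w(p)=(1-p)^3$, so $w'(1)=0$. This only helps you (then $m=0<1$), but the claim should be weakened to $m\geq 0$, noting that $m>1$ does force both derivatives to be strictly negative, which is all the saddle case uses. Second, your ``most delicate'' case $m=1$ is ruled out by the paper's standing genericity assumption: part (1) of the generic-environment assumption, applied at the intersection point $p_1=1$, requires $w'(1)\neq (w^{-1})'(1)=1/w'(0)$, i.e., $w'(0)w'(1)\neq 1$. Third, within that case your dichotomy is imprecise: $m=1$ is not ``exactly'' $\theta\equiv 1$, and ``$C_1\leq C_2$'' does not force the curves to coincide --- if $w^{-1}<w$ strictly near $p_1=1$ the cell between them is nonempty but the flow there points away from the corner, which still makes the hypothesis vacuous, so your conclusion survives even though the stated reason does not. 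With these adjustments the proof stands; indeed your closing remark, that confinement of the tail to the lens plus the sign of $w-w^{-1}$ near the corner handles every case at once, is essentially the paper's proof.
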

\begin{proof}[Sketch of Proof]
Due to analogous arguments to the above sketch of proof we can assume w.l.o.g. that $\textbf{p}(0)$ is between the two curves, and that the trajectory moves upward and to the left (which is required for convergence to (1,0)) iff the curve $w(p_1)$ is above $w^{-1}(p_1)$, and the closest intersection point of the two curves to the left of  $\textbf{p}(0)$ is (1,0). These two conditions imply that any initial state sufficiently close to (1,0) converges to (1,0), which implies that (1,0) is asymptotically stable.
See Appendix \ref{proof-converge-pure-iss-asymp} for the formal proof.
\end{proof}

\subsection{Single Appearance of a Rare Action}
The following lemma characterizes when a single appearance of a rare action in a new agent's sample can change the agent's behavior (see the proof in Appendix \ref{lem-proof}; 
we require strictly higher payoffs for action $h$ and weakly higher payoffs for action $d$ due to our tie-breaking rule in favor of action $d$).

\begin{lem}
\label{lemma-sample} Consider a new agent in population $i$ with a sample size of $k$.
\begin{enumerate}
\item Action-sampling dynamics: (I) Action $h$ induces a 
strictly higher payoff
against a sample with a single opponent's action $d$ iff
$k<\frac{1+g-l}{1-l}$ 
and (II) Action $d$ induces a
weakly higher payoff against a sample
with a single $h$ iff $k\leq \frac{1+g-l}{g}$. 
\item Payoff-sampling dynamics: (I) a $h$-sample with a single $d$
induces a
strictly higher mean payoff than a $d$-sample with no $d$-s
iff $k<\frac{1+g}{1-l}$ and (II) a $d$-sample with no $h$-s
induces a weakly higher mean payoff than an $h$-sample with a single
$h$ iff $k\leq \frac{1+g}{g}$. 
\end{enumerate}
\end{lem}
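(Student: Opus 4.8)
The plan is to treat all four claims as direct evaluations of the payoff formulas already derived in the model section, since each hypothesis fixes the composition of the relevant sample(s) and therefore pins down the binomial realizations $X_k$ (and $Y_k$) exactly. The only genuine care needed will be (a) translating each verbal description of a sample into the correct count of hawks, and (b) respecting the strict-versus-weak distinction dictated by the tie-breaking rule in favor of $d$: statements that favor $h$ need a strict inequality, while statements that favor $d$ need only a weak one.

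For the action-sampling claims I would use that the payoff of $h$ against a size-$k$ sample equals $(1+g)(k-X_k(p_j))$ and the payoff of $d$ equals $k-lX_k(p_j)$. In Part 1(I) a sample with a single $d$ means $X_k=k-1$, so the two payoffs are $1+g$ and $k(1-l)+l$; requiring the former to be strictly larger rearranges to $k<\frac{1+g-l}{1-l}$. In Part 1(II) a sample with a single $h$ means $X_k=1$, so the payoffs are $(1+g)(k-1)$ and $k-l$; requiring $d$ to be weakly larger rearranges to $k\le\frac{1+g-l}{g}$.

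For the payoff-sampling claims the relevant comparison is between the mean payoff of the $h$-sample, proportional to $(1+g)(k-X_k(p_j))$, and that of the $d$-sample, proportional to $k-lY_k(p_j)$; since both means share the common denominator $k$, the comparison reduces to the same inequality on the sums. In Part 2(I) the $h$-sample has a single $d$ (so $X_k=k-1$) while the $d$-sample has no $d$, i.e.\ all hawks (so $Y_k=k$), giving payoffs $1+g$ and $k(1-l)$ and the threshold $k<\frac{1+g}{1-l}$. In Part 2(II) the $d$-sample has no $h$ (so $Y_k=0$) while the $h$-sample has a single $h$ (so $X_k=1$), giving payoffs $(1+g)(k-1)$ and $k$ and the threshold $k\le\frac{1+g}{g}$.

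Since every step is a one-line linear rearrangement, I do not expect any substantive obstacle; the only places to slip are the bookkeeping of the sample compositions—e.g.\ remembering that ``no $d$-s'' in the $d$-sample of Part 2(I) forces $Y_k=k$ rather than $Y_k=0$—and consistently assigning the strict inequality to the $h$-favoring statements and the weak inequality to the $d$-favoring statements, in keeping with the convention that ties break toward $d$.
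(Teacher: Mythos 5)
Your proposal is correct and follows essentially the same route as the paper's proof: fix the sample composition, evaluate the two payoff sums, and rearrange the resulting linear inequality in $k$, with the strict/weak distinction tracking the tie-breaking rule. As a minor note, your computation of the $d$-sample payoff in Part 2(I) as $k(1-l)$ is the correct one (the paper's appendix writes $k\cdot l$ there, an apparent typo, since only $k(1-l)$ yields the stated threshold $\frac{1+g}{1-l}$).
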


Lemma \ref{lemma-sample} allows us to define the 
upper bounds on the sample size in which a single appearance of a rare action can change the behavior of a new agent.

\begin{defn}\label{def:maximal-samples}
Let 
 $\,m_{h}^{A}= 1+\frac{g}{1-l} ,\,m_{d}^{A}= 1+\frac{1-l}{g},
 m_{h}^{P}= \frac{1+g}{1-l} ,\,m_{d}^{P}= \frac{1+g}{g}.$
 
\end{defn}
    We omit the superscript ${A,P}$ when stating a result that is true for both dynamics; for example, we write $m_h$, which denotes $m_h^P$ when the underlying dynamics is payoff sampling, and which denotes $m_h^A$ when the underlying dynamics is action sampling. 
    
    The parameter $m_{h}$ is the 
    upper bound on the sample size for which
a single appearance of $d$ in the sample, when all other sampled actions are $h$, induces a new agent to adopt action $h$. Similarly,  $m_{d}$ is the 
upper bound on the sample size for which a single appearance of $h$ in the sample, when all other sampled actions are $d$, can induce a new agent to adopt action $d$. 
 
We conclude this subsection by presenting a definition of $m$\emph{-bounded expectation} of a probability distribution with support on the set of positive integers. It is the expected value of
the probability distribution by restricting its support to $m.$ Formally,

\begin{defn}
The $m$\emph{-bounded expectation} $\mathbb{E}_{\leq m}$ (resp., $\mathbb{E}_{< m}$) of distribution $\theta$ with support on
integers is\footnote{
Observe that in our notation the parameter $k$ takes only (positive) integer values (although we allow the upper bound $m$ to be a non-integer).}
$\mathbb{E}_{\leq m}\left(\theta\right)=\sum_{1 \leq k\leq m}\theta\left(k\right)\cdot k$ (resp., $\mathbb{E}_{< m}\left(\theta\right)=\sum_{1\leq k< m}\theta\left(k\right)\cdot k$).
\end{defn}
\subsection{Asymptotic Stability of Pure Equilbria}
Our next result characterizes the asymptotic stability of the pure
states. It shows that the asymptotic stability depends only on whether the product of the bounded expectations of the distribution of sample sizes is larger or smaller than one, where the bound of each distribution is the maximal sample size for which a single appearance of a rare action can change the behavior of a new agent. Formally 
(where replacing the tie-breaking rule with an $h$-favorable one would replace the ``<''-s and the ``$\leq$''-s in the statement):
\begin{prop}
\label{prop:pure-equilbiria-satbility}
\begin{enumerate}
\item $\mathbb{E}_{< m_{h}}\left(\theta\right)\cdot \mathbb{E}_{\leq m_{d}}\left(\theta\right)>1\,\Rightarrow\,$ both pure stationary states are unstable.
\item $\mathbb{E}_{<m_{h}}\left(\theta\right)\cdot \mathbb{E}_{\leq m_{d}}\left(\theta\right)<1\,\Rightarrow\,$ both pure stationary states are asymptotically stable. 
\end{enumerate}
\end{prop}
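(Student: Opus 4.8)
The plan is to determine the local stability of each pure state by linearizing the dynamics and reading off the sign of the Jacobian determinant, and then to identify the two relevant partial derivatives with the bounded expectations $\mathbb{E}_{<m_h}(\theta)$ and $\mathbb{E}_{\le m_d}(\theta)$.

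First I would write the dynamics as $\dot p_1=\delta\bigl(w(p_2)-p_1\bigr)$ and $\dot p_2=\delta\bigl(w(p_1)-p_2\bigr)$ and compute the Jacobian at the pure state $(1,0)$, namely
$$J=\delta\begin{pmatrix}-1 & w'(0)\\ w'(1) & -1\end{pmatrix},$$
with trace $-2\delta<0$ and determinant $\delta^2\bigl(1-w'(0)\,w'(1)\bigr)$. Since $w$ is strictly decreasing (Fact \ref{fact-w}), both off-diagonal entries are nonpositive and the product $w'(0)\,w'(1)\ge 0$, so the eigenvalues $\delta\bigl(-1\pm\sqrt{w'(0)\,w'(1)}\bigr)$ are real. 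Hence $(1,0)$ is hyperbolic whenever $w'(0)\,w'(1)\neq 1$: it is an asymptotically stable node when $w'(0)\,w'(1)<1$ (determinant positive), and a saddle, hence unstable, when $w'(0)\,w'(1)>1$ (determinant negative). The state $(0,1)$ produces the same trace and determinant, so it behaves identically. In the saddle case I would also check that the unstable eigendirection points into $[0,1]^2$: a short computation shows the unstable eigenvector has components of opposite sign, so its feasible branch (decreasing $p_1$, increasing $p_2$ at $(1,0)$) enters the interior, confirming genuine instability within the domain.

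Second, and this is the heart of the argument, I would compute $w'(0)$ and $w'(1)$ from the binomial structure, treating both dynamics at once by writing $w(p)=\sum_k\theta(k)\,w_k(p)$, where $w_k(p)$ is the probability that a $k$-agent adopts $h$ against opponent share $p$. Near $p=0$ the sampled actions are almost all $d$, so only configurations containing at most one sampled $h$ contribute to $w_k'(0)$ at first order in $p$; collecting these terms and invoking Lemma \ref{lemma-sample} (equivalently, the definition of $m_d$), which says a lone sampled $h$ flips the agent to $h$ iff $k>m_d$, gives $w_k'(0)=-k\,\mathbf{1}[k\le m_d]$, and summing over $k$ yields $w'(0)=-\mathbb{E}_{\le m_d}(\theta)$. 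Symmetrically, near $p=1$ only configurations with at most one sampled $d$ matter, and a lone $d$ flips the agent to $h$ iff $k<m_h$, giving $w_k'(1)=-k\,\mathbf{1}[k<m_h]$ and $w'(1)=-\mathbb{E}_{<m_h}(\theta)$. The strict-versus-weak distinction between $m_h$ and $m_d$ tracks the tie-breaking convention and is exactly the $<$/$\le$ distinction in the two bounded expectations.

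Combining the steps, $w'(0)\,w'(1)=\mathbb{E}_{<m_h}(\theta)\cdot\mathbb{E}_{\le m_d}(\theta)$, so the determinant of $J$ is positive iff this product is below $1$ and negative iff it is above $1$, delivering parts (2) and (1) respectively for both pure states and both classes of sampling dynamics. The main obstacle I anticipate is the careful first-order expansion in the second step: one must verify that every configuration with two or more occurrences of the rare action contributes at order $O(p^2)$ (immediate for a single binomial sample in the action-sampling case, and requiring a little extra bookkeeping for the two independent samples $X_k,Y_k$ in the payoff-sampling case, where the extra ``one $h$ in the $d$-sample'' configuration keeps the agent at $h$ and contributes an offsetting term), and, when $\theta$ has unbounded support, to justify differentiating the series term by term at the endpoints, which is legitimate because only the finitely many terms with $k\le\max(m_h,m_d)$ survive. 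The borderline case $w'(0)\,w'(1)=1$ is non-hyperbolic and is correctly left uncovered by the statement.
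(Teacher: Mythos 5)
Your proof is correct and follows essentially the same route as the paper's: linearize at the pure state, identify the off-diagonal Jacobian entries with $-\mathbb{E}_{\leq m_{d}}(\theta)$ and $-\mathbb{E}_{<m_{h}}(\theta)$ via the first-order (single rare action) expansion of the binomial probabilities, and read off the eigenvalues $-1\pm\sqrt{\mathbb{E}_{<m_{h}}(\theta)\cdot\mathbb{E}_{\leq m_{d}}(\theta)}$, exactly as in Appendix \ref{subsec:Proof-of-Theorem-2}. The only blemish is a wording slip (``a lone sampled $h$ flips the agent to $h$ iff $k>m_d$'' should read ``flips the agent to $d$ iff $k\leq m_d$''), but the formula $w_k'(0)=-k\cdot\mathbf{1}[k\leq m_d]$ that you derive from it is the correct one.
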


\begin{proof}[Sketch of Proof]
Consider a slightly perturbed state $(\epsilon,1-\epsilon)$ near the pure equilibrium $(0,1).$ Observe that almost all agents in population 1 (resp.,  2) play $d$ (resp., $h$). 
The event of two rare actions appearing in a sample of a new agent has a negligible probability of $O(\epsilon^2)$. If a new agent has a sample size of $k$, then the probability of a rare action appearing in the sample is approximately $k \cdot \epsilon.$ This rare appearance changes the perceived best reply of a new agent of population 1 iff $k$ is smaller than $m_{h}$. Thus, the total probability that a new agent of population 1 (resp., 2) adopts a rare action 
is equal to $\mathbb{E}_{<m_{h}}\left(\theta\right)$ (resp., $\mathbb{E}_{\leq m_{d}}\left(\theta\right)$). This implies that the product of the share of new agents adopting a rare action in each population is $\epsilon\cdot\mathbb{E}_{< m_{h}}\left(\theta\right)\cdot\e\cdot\mathbb{E}_{\leq m_{d}}\left(\theta\right)$. This shows that the share of agents playing rare actions gradually increases (resp., decreases) if $\mathbb{E}_{< m_{h}}\left(\theta\right)\cdot\mathbb{E}_{\leq m_{d}}\left(\theta\right)>1$ (resp., $\mathbb{E}_{< m_{h}}\left(\theta\right)\cdot\mathbb{E}_{\leq m_{d}}\left(\theta\right)<1),$ which implies instability (resp., asymptotic stability). See Appendix \ref{subsec:Proof-of-Theorem-2} for a formal proof.
\qedhere

\end{proof}
Observe that the fact that $l>0$ immediately implies that $m_{h}^{A}<m_{h}^{P}$ and $m_{d}^{A}<m_{d}^{P}$,
which, in turn, implies that instability under the action-sampling dynamics holds in a strictly smaller set of distributions than under the payoff-sampling dynamics. 
\begin{cor}
~\label{cor:Instability-A-instability-P}
If the pure stationary states are unstable under the action-sampling dynamics,
then they are also unstable under the payoff-sampling dynamics.

\end{cor}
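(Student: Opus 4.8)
The plan is to derive Corollary \ref{cor:Instability-A-instability-P} as an immediate consequence of Proposition \ref{prop:pure-equilbiria-satbility} together with a monotonicity observation relating the instability thresholds across the two dynamics. The key quantity governing instability is the product $\mathbb{E}_{<m_h}(\theta)\cdot\mathbb{E}_{\leq m_d}(\theta)$, where the bounds $m_h,m_d$ take the dynamics-specific values from Definition \ref{def:maximal-samples}. Part 1 of Proposition \ref{prop:pure-equilbiria-satbility} tells us that instability under a given dynamic holds precisely when this product exceeds $1$, with the bounds $m_h^A,m_d^A$ for action sampling and $m_h^P,m_d^P$ for payoff sampling. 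So the entire content of the corollary reduces to comparing these two products.

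First I would record the threshold comparison already noted in the text: since $l>0$, a direct computation gives $m_h^A=1+\tfrac{g}{1-l}<\tfrac{1+g}{1-l}=m_h^P$ and $m_d^A=1+\tfrac{1-l}{g}<\tfrac{1+g}{g}=m_d^P$. (Each inequality is elementary: for the first, $1+\tfrac{g}{1-l}=\tfrac{1-l+g}{1-l}<\tfrac{1+g}{1-l}$ since $1-l<1$; the second is analogous.) Thus the action-sampling bounds are strictly smaller than the corresponding payoff-sampling bounds.

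Next I would invoke the obvious monotonicity of the bounded expectation in its upper limit: for any distribution $\theta$ supported on the positive integers and any $m'\geq m$, we have $\mathbb{E}_{<m'}(\theta)\geq\mathbb{E}_{<m}(\theta)$, because enlarging the bound can only add further nonnegative terms $\theta(k)\cdot k$ to the truncated sum (and likewise for $\mathbb{E}_{\leq m}$). Applying this with the threshold inequalities from the previous step yields $\mathbb{E}_{<m_h^A}(\theta)\leq\mathbb{E}_{<m_h^P}(\theta)$ and $\mathbb{E}_{\leq m_d^A}(\theta)\leq\mathbb{E}_{\leq m_d^P}(\theta)$. Since all four quantities are nonnegative, multiplying the two inequalities gives
\[
\mathbb{E}_{<m_h^A}(\theta)\cdot\mathbb{E}_{\leq m_d^A}(\theta)\;\leq\;\mathbb{E}_{<m_h^P}(\theta)\cdot\mathbb{E}_{\leq m_d^P}(\theta).
\]
Hence, if the action-sampling product exceeds $1$ (so that the pure states are unstable under action sampling by Proposition \ref{prop:pure-equilbiria-satbility}(1)), the payoff-sampling product exceeds $1$ as well, and the same proposition gives instability under payoff sampling.

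I do not anticipate a genuine obstacle here: the result is a soft consequence of threshold monotonicity and positivity of truncated moments, with Proposition \ref{prop:pure-equilbiria-satbility} doing all the analytic work. The only point warranting a moment of care is that the truncation of $\mathbb{E}_{<m_h}$ uses a strict bound while $\mathbb{E}_{\leq m_d}$ uses a weak one; since the threshold inequalities $m^A<m^P$ are strict, both truncated sums enlarge monotonically regardless of whether the endpoint is included, so the asymmetry in strictness is harmless. It would be worth remarking that the inclusion of distributions is in fact strict whenever $\theta$ places positive mass on an integer lying between the action and payoff thresholds, which is the precise sense in which instability under payoff sampling occurs on a strictly larger set of distributions.
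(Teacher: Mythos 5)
Your proposal is correct and follows essentially the same route as the paper, which derives the corollary from the observation that $l>0$ implies $m_h^A<m_h^P$ and $m_d^A<m_d^P$, so the truncated expectations (and hence their product) can only be larger under payoff sampling, and Proposition \ref{prop:pure-equilbiria-satbility} does the rest. Your write-up merely makes explicit the monotonicity of the bounded expectation in its upper limit, which the paper leaves implicit.
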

Next we observe that the pure stationary states are stable under the action-sampling dynamics in populations in which all agents have sample sizes of at least 2.
\begin{cor}\label{cor:theta-1-action}
Assume that $\theta(1)=0$. The pure stationary states are stable under the action-sampling dynamics.
\end{cor}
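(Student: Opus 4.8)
The plan is to derive the corollary directly from part (2) of Proposition \ref{prop:pure-equilbiria-satbility}, by showing that the hypothesis $\theta(1)=0$ forces the product $\mathbb{E}_{<m_h^A}(\theta)\cdot\mathbb{E}_{\leq m_d^A}(\theta)$ to be zero, and in particular strictly below $1$. Since that proposition then guarantees asymptotic stability of both pure stationary states, and asymptotic stability is stronger than the claimed stability, the corollary follows.

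The first thing I would record is the algebraic identity linking the two action-sampling thresholds. From Definition \ref{def:maximal-samples} we have $m_h^A-1=\tfrac{g}{1-l}$ and $m_d^A-1=\tfrac{1-l}{g}$, so that $(m_h^A-1)(m_d^A-1)=1$. Both factors are strictly positive (because $g,l\in(0,1)$ gives $m_h^A,m_d^A>1$), so this identity prevents both thresholds from exceeding $2$ at once: either $m_h^A\leq 2$, or else $m_h^A>2$, in which case necessarily $m_d^A<2$. This dichotomy is the whole engine of the argument.

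I would then treat the two cases. If $m_h^A\leq 2$, then the only positive integer $k$ with $k<m_h^A$ is $k=1$; here the strict upper bound in the definition of $\mathbb{E}_{<m_h^A}$ is exactly what lets this conclusion survive the boundary value $m_h^A=2$ (which occurs precisely when $g+l=1$). Hence $\mathbb{E}_{<m_h^A}(\theta)=\theta(1)\cdot 1=0$. If instead $m_h^A>2$, then $m_d^A<2$, so the only positive integer $k$ with $k\leq m_d^A$ is again $k=1$, giving $\mathbb{E}_{\leq m_d^A}(\theta)=\theta(1)\cdot 1=0$. In either case one factor vanishes, the product is $0<1$, and Proposition \ref{prop:pure-equilbiria-satbility}(2) yields the conclusion.

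I do not expect a genuine obstacle here: once the identity $(m_h^A-1)(m_d^A-1)=1$ is in hand, the result is essentially immediate. The only point that requires care is the boundary case $m_h^A=m_d^A=2$, where it is essential that $\mathbb{E}_{<m_h^A}$ carries a \emph{strict} upper bound, so that $k=1$ remains the unique index in range and the assumption $\theta(1)=0$ still suffices; routing this boundary into the first case (rather than the second, whose bound is non-strict) is the one bookkeeping detail to get right.
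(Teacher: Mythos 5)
Your proof is correct and follows essentially the same route as the paper's: the paper observes that either $\min(m_h^A,m_d^A)<2$ or $m_h^A=m_d^A=2$, so one of the two bounded expectations collapses to $\theta(1)$, and then invokes Proposition \ref{prop:pure-equilbiria-satbility}(2); your identity $(m_h^A-1)(m_d^A-1)=1$ is just a cleaner way of packaging that same dichotomy, and you correctly handle the boundary case $m_h^A=m_d^A=2$ via the strict bound in $\mathbb{E}_{<m_h}$. (Note that the paper's own proof text ends with the word ``unstable,'' which is evidently a typo for ``asymptotically stable''; your conclusion is the right one.)
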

\begin{proof}
Observe that either $\min(m_h^A,m_d^A)<2$ or $m_h^A=m_d^A=2$. Both conditions imply that one of the expressions in the product of $\mathbb{E}_{< m_{h}}\left(\theta\right)\cdot \mathbb{E}_{\leq m_{d}}\left(\theta\right)$ is equal to $\theta(1),$ which implies (by Proposition \ref{prop:pure-equilbiria-satbility}) that the pure stationary states are stable if $\theta(1)=0$. 
\end{proof}
\subsection{Main Result}
Combining Propositions \ref{pro-convergence-to-stationary}--\ref{prop:pure-equilbiria-satbility} yields the main result of this section. It shows that the population converges from almost any initial state to an interior stationary state if 
(and essentially only if) the product of the bounded expectations of the distribution of sample sizes is larger than 1.
\begin{thm}\label{thm:global-mixed}
\begin{enumerate}
    \item Assume that $\mathbb{E}_{< m_{h}}\left(\theta\right)\cdot \mathbb{E}_{\leq m_{d}}\left(\theta\right)>1.$ If $\boldsymbol{p}(0)\notin\left\{ (0,1),(1,0)\right\}$, then $\lim_{t\rightarrow\infty}\mathbf{p}\left(t\right)\in{(0,1)}^2$.
    \item Assume that $\mathbb{E}_{< m_{h}}\left(\theta\right)\cdot \mathbb{E}_{\leq m_{d}}\left(\theta\right)<1.$ Then there exist $\boldsymbol{p}(0),\boldsymbol{\hat{p}}(0)\notin\left\{ (0,1),(1,0)\right\}$ such that $\lim_{t\rightarrow\infty}\mathbf{p}\left(t\right)=(0,1)$ and $\lim_{t\rightarrow\infty}\mathbf{\hat{p}}\left(t\right)=(1,0)$. 
\end{enumerate}
\end{thm}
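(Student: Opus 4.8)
The plan is to obtain both parts as direct consequences of Propositions \ref{pro-convergence-to-stationary}--\ref{prop:pure-equilbiria-satbility}, once I record one preliminary observation: the only stationary states lying on the boundary of $[0,1]^2$ are the two pure states $(0,1)$ and $(1,0)$. By Fact \ref{fact-intersection} every stationary state $(p_1,p_2)$ satisfies $w(p_1)=w^{-1}(p_1)=p_2$, and by Fact \ref{fact-w} the map $w$ is a strictly decreasing bijection of $[0,1]$ with $w(0)=1$ and $w(1)=0$. Hence a boundary stationary state with $p_1=0$ forces $p_2=w(0)=1$, giving $(0,1)$; one with $p_1=1$ forces $p_2=w(1)=0$, giving $(1,0)$; and a state with $p_2=0$ (resp.\ $p_2=1$) forces $p_1=1$ (resp.\ $p_1=0$), again yielding $(1,0)$ (resp.\ $(0,1)$). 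Thus any stationary state other than the two pure states is interior.

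For Part 1, I would assume the product exceeds one. By Proposition \ref{prop:pure-equilbiria-satbility}(1) both pure states are unstable, and by Proposition \ref{pro-convergence-to-stationary} the trajectory from any $\mathbf{p}(0)$ converges to some stationary state. I then rule out convergence to a pure state by contradiction: if $\mathbf{p}(0)\notin\{(0,1),(1,0)\}$ but $\lim_{t\to\infty}\mathbf{p}(t)=(1,0)$, then Proposition \ref{pro-convergence-to-pure-iff-asymp} makes $(1,0)$ asymptotically stable, contradicting its instability; the case of $(0,1)$ uses the symmetric version of Proposition \ref{pro-convergence-to-pure-iff-asymp}, whose hypothesis $\mathbf{p}(0)\neq(0,1)$ is exactly what we have assumed. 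By the boundary observation above, the only remaining possibility for the limit is an interior stationary state, which is the claim.

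For Part 2, I would assume the product is below one. By Proposition \ref{prop:pure-equilbiria-satbility}(2) both pure states are asymptotically stable, so by the definition of asymptotic stability (Appendix \ref{sub-standard-definitions}) each pure state admits an open neighborhood from which every trajectory converges to it. Choosing any point of this neighborhood distinct from the pure state itself (for instance an interior point near $(0,1)$, resp.\ near $(1,0)$) supplies the required $\mathbf{p}(0)$ with $\lim_{t\to\infty}\mathbf{p}(t)=(0,1)$ and the required $\mathbf{\hat{p}}(0)$ with $\lim_{t\to\infty}\mathbf{\hat{p}}(t)=(1,0)$.

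Since neither part requires a delicate estimate, I do not expect a serious obstacle beyond the boundary bookkeeping of the first paragraph; the single point to watch is that ``unstable'' must be read in the strong (non-Lyapunov-stable) sense, so that it genuinely excludes asymptotic stability and the contradiction in Part 1 is valid. The essential work has already been carried out in the three invoked propositions, so this statement is purely a synthesis of them.
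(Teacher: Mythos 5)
Your proposal is correct and follows essentially the same route as the paper's own proof: Part 1 combines Propositions \ref{pro-convergence-to-stationary}, \ref{pro-convergence-to-pure-iff-asymp}, and \ref{prop:pure-equilbiria-satbility} to rule out convergence to a pure state, and Part 2 follows directly from the asymptotic stability given by Proposition \ref{prop:pure-equilbiria-satbility}(2). Your preliminary observation that the only boundary stationary states are $(0,1)$ and $(1,0)$ (via Facts \ref{fact-w} and \ref{fact-intersection}) is a small but genuine tightening that the paper leaves implicit, and it is needed to conclude that the limit in Part 1 is interior.
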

\begin{proof}
\begin{enumerate}
    \item Assume that  $\mathbb{E}_{< m_{h}}\left(\theta\right)\cdot \mathbb{E}_{\leq m_{d}}\left(\theta\right)>1.$ Proposition \ref{prop:pure-equilbiria-satbility} implies that both pure stationary states are unstable. Combining Propositions \ref{pro-convergence-to-stationary}--\ref{pro-convergence-to-pure-iff-asymp} implies that from almost any initial state, the population converges to an interior stationary state.
    \item Assume that  $\mathbb{E}_{< m_{h}}\left(\theta\right)\cdot \mathbb{E}_{\leq m_{d}}\left(\theta\right)<1.$ Proposition \ref{prop:pure-equilbiria-satbility} implies that both pure stationary states are stable, which implies part (2). \qedhere
\end{enumerate}
\end{proof}

Next we apply Theorem \ref{thm:global-mixed} to standard hawk--dove games in which the parameter $g=l$ 
describes both the gain of a hawkish player and the loss of her dovish opponent. 
In order to simplify the statement we assume that the various $m$-s are non-integers (to avoid the weak dependency on the tie-breaking rule).
\begin{cor}\label{cor:g=l}
Assume that $g=l$ and that $\frac{1}{g}$, $\frac{1}{1-g}$, $\frac{1+g}{g}$, and $\frac{1+g}{1-g}$ are non-integers. The population converges to a mixed stationary state from any interior state iff:
\begin{enumerate}
    \item Action-sampling dynamics: $\theta(1)\cdot\mathbb{E}_{\leq \max\left(\frac{1}{g},\frac{1}{1-g}\right)}\left(\theta\right)>1.$
    \item Payoff-sampling dynamics: $\begin{aligned}[t]
   \text{either } 
   (a) \ g&<\frac{1}{3} \ \text{and} \ \theta(1)\cdot\mathbb{E}_{\leq \frac{1+g}{g}}\left(\theta\right)>1\\
        \text{or } (b) \ g&\geq\frac{1}{3} \ \text{and}  \ (\theta(1)+2\theta(2))\mathbb{E}_{\leq \max\left(3,\frac{1+g}{1-g}\right)}\left(\theta\right)>1.                      \end{aligned}$
   \end{enumerate}
\end{cor}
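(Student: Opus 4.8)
The plan is to obtain the result as a direct specialization of Theorem \ref{thm:global-mixed}, which already reduces ``convergence to a mixed state from every interior initial condition'' to the single inequality $\mathbb{E}_{<m_{h}}(\theta)\cdot\mathbb{E}_{\leq m_{d}}(\theta)>1$. Indeed, part (1) of that theorem supplies the ``if'' direction (the product exceeding $1$ makes both pure states unstable, so every non-pure trajectory is pulled into the interior), while part (2) supplies the converse by contraposition (when the product is below $1$ both pure states are asymptotically stable, hence a neighborhood of each converges to a pure state). So the whole task is to substitute $l=g$ into Definition \ref{def:maximal-samples} and rewrite the two bounded expectations explicitly. I would first record that the non-integrality hypotheses on $\frac1g,\frac1{1-g},\frac{1+g}{g},\frac{1+g}{1-g}$ are precisely the assertions that $m_{d}^{A},m_{h}^{A},m_{d}^{P},m_{h}^{P}$ are non-integers; this guarantees that no integer sample size lands on a threshold, so $\mathbb{E}_{<m}=\mathbb{E}_{\leq m}$ for each of these bounds and the tie-breaking rule plays no role. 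It also forces $g\neq\tfrac12$ and (for the payoff case) $g\neq\tfrac13$.

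For the action-sampling dynamics, $l=g$ collapses the bounds to $m_{h}^{A}=\frac{1}{1-g}$ and $m_{d}^{A}=\frac1g$, and I would split on the sign of $g-\tfrac12$. If $g<\tfrac12$ then $1<m_{h}^{A}<2<m_{d}^{A}$, so the only integer below $m_{h}^{A}$ is $k=1$ and $\mathbb{E}_{<m_{h}^{A}}(\theta)=\theta(1)$, while $\mathbb{E}_{\leq m_{d}^{A}}(\theta)=\mathbb{E}_{\leq 1/g}(\theta)=\mathbb{E}_{\leq\max(1/g,\,1/(1-g))}(\theta)$. If $g>\tfrac12$ the two roles swap and $\mathbb{E}_{\leq m_{d}^{A}}(\theta)=\theta(1)$, while $\mathbb{E}_{<m_{h}^{A}}(\theta)=\mathbb{E}_{\leq 1/(1-g)}(\theta)=\mathbb{E}_{\leq\max(1/g,\,1/(1-g))}(\theta)$ (here the non-integrality converts $<$ into $\leq$). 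In either regime the product equals $\theta(1)\cdot\mathbb{E}_{\leq\max(1/g,\,1/(1-g))}(\theta)$, which is exactly condition (1).

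For the payoff-sampling dynamics the bounds are $m_{h}^{P}=\frac{1+g}{1-g}$ and $m_{d}^{P}=\frac{1+g}{g}=1+\frac1g$; note $m_{h}^{P}$ crosses $2$ at $g=\tfrac13$ and crosses $3$ at $g=\tfrac12$ (where $m_{d}^{P}$ simultaneously crosses $3$ from above). I would split on $g$ again. When $g<\tfrac13$, $m_{h}^{P}<2$ gives $\mathbb{E}_{<m_{h}^{P}}(\theta)=\theta(1)$ and the product is $\theta(1)\cdot\mathbb{E}_{\leq\frac{1+g}{g}}(\theta)$, i.e.\ the first line of (2). When $g>\tfrac13$ I expect exactly one factor to equal $\theta(1)+2\theta(2)$: for $\tfrac13<g<\tfrac12$ one has $m_{h}^{P}\in(2,3)$, so $\mathbb{E}_{<m_{h}^{P}}(\theta)=\theta(1)+2\theta(2)$, while $m_{d}^{P}\in(3,4)$ forces $\mathbb{E}_{\leq m_{d}^{P}}(\theta)=\mathbb{E}_{\leq 3}(\theta)$; for $g>\tfrac12$ one has $m_{d}^{P}\in(2,3)$, so $\mathbb{E}_{\leq m_{d}^{P}}(\theta)=\theta(1)+2\theta(2)$, while $m_{h}^{P}>3$ forces $\mathbb{E}_{<m_{h}^{P}}(\theta)=\mathbb{E}_{\leq m_{h}^{P}}(\theta)$. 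Since $\max(3,m_{h}^{P})=3$ when $m_{h}^{P}<3$ and $\max(3,m_{h}^{P})=m_{h}^{P}$ when $m_{h}^{P}>3$, the complementary factor is uniformly $\mathbb{E}_{\leq\max(3,\,(1+g)/(1-g))}(\theta)$, and the product becomes $(\theta(1)+2\theta(2))\cdot\mathbb{E}_{\leq\max(3,\,(1+g)/(1-g))}(\theta)$, i.e.\ the second line.

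The genuinely delicate point, and the only place where the argument is more than bookkeeping, is the payoff case for $g>\tfrac13$: one must notice that the term $2\theta(2)$ appears for a structurally different reason in each regime (from $m_{h}^{P}\in(2,3)$ when the game is dove-favorable, from $m_{d}^{P}\in(2,3)$ when it is hawk-favorable) yet always produces the same factor $\theta(1)+2\theta(2)$, and that the remaining factor is captured uniformly by the single expression $\max(3,m_{h}^{P})$. Identifying the two thresholds $\tfrac13$ and $\tfrac12$ and checking which integers lie strictly below each non-integer bound in each of the three regimes is the bulk of the work, but it is routine once those thresholds are in hand. Finally, the knife-edge product $=1$ is neither covered by Theorem \ref{thm:global-mixed} nor excluded by the non-integrality hypotheses, so I would simply read the stated ``iff'' with the strict inequality that the corollary writes.
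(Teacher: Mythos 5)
Your proposal is correct and follows essentially the same route as the paper's Appendix proof: invoke Theorem \ref{thm:global-mixed} and then substitute $l=g$ into Definition \ref{def:maximal-samples}, identifying which integers lie below each threshold in the regimes delimited by $g=\tfrac13$ and $g=\tfrac12$. If anything your case analysis is more explicit than the paper's (which compresses the regime-splitting into statements about $\min$ and $\max$ of the floors), and your observations about non-integrality collapsing $\mathbb{E}_{<m}$ to $\mathbb{E}_{\leq m}$ and about the uncovered knife-edge product $=1$ are accurate.
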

The straightforward proof, which relies on substituting $g=l$ in Definition \ref{def:maximal-samples}, is presented in Appendix \ref{proof-cor-g=l}.
Corollary \ref{cor:g=l} implies that global convergence to one of the mixed stationary states holds iff: (1) sufficiently many agents have a sample size of 1, (2) the expected sample size 
(conditional on the sample being below a threshold that is increasing in $|g-0.5|$) is sufficiently large, and (3) $g$  is not too close to 0.5. Under the payoff-sampling dynamics (and assuming $g>\frac{1}{3}$), global convergence holds also if one relaxes Condition (1) by requiring that (1') sufficiently many agents have a sample size of at most 2.
As demonstrated in Section \ref{sec:Numeric-Analysis}, these conditions hold for many environments with 
relatively small samples under the payoff-sampling dynamics, and for a somewhat limited set of environments under the action-sampling dynamics.

\begin{rem}
Theorem \ref{thm:global-mixed} implies that if the sample sizes of all agents in the population are sufficiently large (larger than all the $m$-s in Definition \ref{def:maximal-samples}), then the pure stationary states are stable.
Perhaps surprisingly, a moderate increase in the sample sizes of some agents in the population can yield the opposite effect, that of destabilizing the pure stationary states. For example, when $g=l=0.25$, the pure stationary states are  stable under the action-sampling dynamics given a population in which $75\%$ of the agents have sample size 1 and $25\%$ have sample size 2 (because $\theta(1)\cdot\mathbb{E}_{\leq \max\left(\frac{1}{g},\frac{1}{1-g}\right)}\left(\theta\right)=\frac{3}{4}\cdot\frac{5}{4}=\frac{15}{16}<1$). By contrast, if one increases the samples of the latter group of agents from size 2 to size 3, then the pure stationary states become unstable (because $\theta(1)\cdot\mathbb{E}_{\leq \max\left(\frac{1}{g},\frac{1}{1-g}\right)}\left(\theta\right)=\frac{3}{4}\cdot\frac{3}{2}=\frac{9}{8}>1$); moreover, one can show that in this case the symmetric mixed state is globally stable.
\end{rem}
\section{Global Convergence to Pure States}\label{sec-conv-to-pure}
In this section, we show that \emph{homogeneous} populations (in which all agents have the same sample size) behave under the \emph{action-sampling dynamics} in the same way as under monotone dynamics: namely, they converge from almost any initial state to one of the pure equilibria. 
By contrast, In Section \ref{sec:local stability mixed states} we will show that 
convergence  to the symmetric mixed stationary state is achieved in many cases in which we have either: (1)  heterogeneous  populations or (2) payoff-sampling dynamics.

Theorem \ref{thm:unstableinterior_ASD} shows that if all agents have the same sample size, then the action-sampling dynamics admit a unique interior stationary state, which is unstable. This implies (together with Proposition \ref{pro-convergence-to-stationary}) that the phase plots of all such dynamics are qualitatively equivalent to the left panel of Figure \ref{figure1}, and that \emph{all initial populations eventually converge to one of the pure equilibria}
(in the sense that even if a population initially converges to the unique unstable interior equilibrium, an arbitrarily small perturbation will still take the population from there to one of the pure equilibria). 
\begin{thm}\label{thm:unstableinterior_ASD}
For\footnote{\label{foot-k-1}
It is straightforward to show that the case in which all agents have sample size 1 induces an environment in which the stationary states are $(p,1-p)$ for any $p\in[0,1]$, and that all these states are neither asymptotically stable nor 
unstable (i.e., they are Lyapunov stable) for any $g,l$ and under both sampling dynamics.} 
$\theta\equiv k>1,$ the action-sampling dynamics admit a unique interior stationary state, which is unstable.
\end{thm}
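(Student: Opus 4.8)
The plan is to reduce the theorem to two analytic facts about the single-variable polynomial $w(p)\equiv w^{A}(p)$ from \eqref{eq:action-sampling0dyanmics}. Writing $r$ for the largest integer with $r<\tfrac{g}{1+g-l}k$, that equation gives $w(p)=\Pr(X_{k}(p)\le r)=\sum_{i=0}^{r}\binom{k}{i}p^{i}(1-p)^{k-i}$, the binomial CDF viewed as a function of $p$. By Fact \ref{fact-w} the map $p\mapsto w(p)-p$ is strictly decreasing from $1$ to $-1$, so $w(p)=p$ has a unique root $p^{\ast}\in(0,1)$, yielding the symmetric interior stationary state $(p^{\ast},p^{\ast})$, which by Fact \ref{fact-intersection} is the only \emph{symmetric} interior stationary state. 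The dynamics are $\dot p_{1}=\delta(w(p_{2})-p_{1})$, $\dot p_{2}=\delta(w(p_{1})-p_{2})$, whose Jacobian at $(p^{\ast},p^{\ast})$ is $\delta\left(\begin{smallmatrix}-1 & w'(p^{\ast})\\ w'(p^{\ast}) & -1\end{smallmatrix}\right)$, with eigenvalues $\delta(-1\pm w'(p^{\ast}))$. Since $w'<0$, one eigenvalue is always negative while the other, $\delta(|w'(p^{\ast})|-1)$, is positive exactly when $|w'(p^{\ast})|>1$. Thus everything reduces to showing (i) there is no \emph{asymmetric} interior stationary state, and (ii) $|w'(p^{\ast})|>1$.

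For (i), an interior stationary state $(p_{1},p_{2})$ satisfies $p_{2}=w(p_{1})$ and $p_{1}=w(p_{2})$ (Fact \ref{fact-intersection}), so the defining curves $p_{2}=w(p_{1})$ and $p_{2}=w^{-1}(p_{1})$ are reflections of one another across the diagonal; an asymmetric state would therefore appear as a pair $(a,b),(b,a)$ with $a<p^{\ast}<b$, $w(a)=b$ and $w(b)=a$, i.e.\ as a genuine $2$-cycle of $w$. I would rule this out using the shape of $w$: differentiating twice gives $w''(p)=-k\binom{k-1}{r}p^{r-1}(1-p)^{k-2-r}\bigl(r-(k-1)p\bigr)$, which changes sign exactly once on $(0,1)$, at $p_{I}=r/(k-1)$. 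Hence $w$ is concave on $[0,p_{I}]$ and convex on $[p_{I},1]$ (a single inflection). A crossing-number argument based on this single inflection, together with the reflection symmetry, shows that $\Psi(p):=w(w(p))-p$ is negative on $(0,p^{\ast})$ and positive on $(p^{\ast},1)$, so $p^{\ast}$ is its only interior zero; combined with Proposition \ref{pro-convergence-to-stationary} this gives uniqueness.

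For (ii), the standard derivative of a binomial CDF is $w'(p)=-k\binom{k-1}{r}p^{r}(1-p)^{k-1-r}$, so $|w'(p^{\ast})|$ equals the value at $p^{\ast}$ of the $\mathrm{Beta}(r+1,k-r)$ density $f$, while the fixed-point equation $w(p^{\ast})=p^{\ast}$ reads $F(p^{\ast})=1-p^{\ast}$ for the corresponding CDF $F$. The target $|w'(p^{\ast})|>1$ then becomes the clean statement that, at the self-consistent point where $F(p^{\ast})=1-p^{\ast}$, the density satisfies $f(p^{\ast})>1$. (For $k=1$ this holds with equality, matching the Lyapunov-stable degenerate case of footnote \ref{foot-k-1}; for $k>1$ the density is non-uniform, so $f$ exceeds $1$ on a sub-interval and the whole content is to locate $p^{\ast}$ inside it.) The boundary exponents $r\in\{0,k-1\}$ (monotone density) are easy: for $r=0$, $w(p)=(1-p)^{k}$ gives $|w'(p^{\ast})|=k\,p^{\ast}/(1-p^{\ast})$, and the elementary estimate $p^{\ast}>1/(k+1)$ (which follows since $w(1/(k+1))=(k/(k+1))^{k}>1/e>1/(k+1)$) yields $|w'(p^{\ast})|>1$; the case $r=k-1$ is symmetric.

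The main obstacle is the interior case $1\le r\le k-2$. Using that $f$ is log-concave and unimodal with mode $p_{I}$, and invoking the symmetry $(r,p^{\ast})\mapsto(k-1-r,1-p^{\ast})$ to assume $p^{\ast}\le\tfrac12$, one direction is short: if $p^{\ast}$ lay at or below the left point where $f=1$, then $f\le1$ on $[0,p^{\ast}]$ would force $F(p^{\ast})\le p^{\ast}$, contradicting $F(p^{\ast})=1-p^{\ast}\ge\tfrac12\ge p^{\ast}$. The delicate regime is when $p^{\ast}$ sits just to the right of the mode (so $p_{I}<p^{\ast}\le\tfrac12$): there the crude unimodality bound $\int_{0}^{p^{\ast}}f\le p^{\ast}f(p^{\ast})$ fails because of the mass near the mode, and one must instead establish $\int_{0}^{p^{\ast}}f(t)\,dt<(1-p^{\ast})f(p^{\ast})$ by a sharper estimate of the left tail, for instance via the substitution $t=p^{\ast}s$ and a careful bound on $\int_{0}^{1}s^{r}(1-p^{\ast}s)^{k-1-r}\,ds$. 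I expect this quantitative binomial inequality, rather than the linear-algebra or qualitative phase-plane steps, to be the crux of the argument.
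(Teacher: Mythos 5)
Your reduction is correct and matches the paper's: unique symmetric fixed point $p^{\ast}$ of $w(p)=\Pr(X_k(p)\le r)$, Jacobian with eigenvalues $\delta(-1\pm w'(p^{\ast}))$, so everything hinges on $|w'(p^{\ast})|>1$ (plus ruling out asymmetric interior states, which the paper's own proof also leaves unfinished, so I won't weigh that against you). Your translation into Beta language is also accurate: $|w'(p^{\ast})|$ is the $\mathrm{Beta}(r+1,k-r)$ density at $p^{\ast}$ and the fixed-point equation reads $F(p^{\ast})=1-p^{\ast}$. But the proposal has a genuine gap exactly where you flag it: for $1\le r\le k-2$ in the regime where $p^{\ast}$ lies to the right of the mode $r/(k-1)$, you state the inequality $\int_0^{p^{\ast}}f<(1-p^{\ast})f(p^{\ast})$ that "must" be established and offer only a change of variables you "expect" to work. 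That is the entire content of the theorem in the hard case, and it is not proved.

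The paper closes this case with two concrete devices you are missing. First, writing $p^{\ast}=\sum_{j=0}^{m}\binom{k}{j}(p^{\ast})^j(1-p^{\ast})^{k-j}$ (with $m=r$), it checks that consecutive terms satisfy $a_j/a_{j-1}\ge 1$ whenever $p^{\ast}\ge j/(k+1)$, which holds because $p^{\ast}>m/(k-1)>j/(k+1)$; hence $p^{\ast}\le(m+1)a_m$, which yields $|w'(p^{\ast})|\ge\frac{k-m}{m+1}\cdot\frac{p^{\ast}}{1-p^{\ast}}$ and so $|w'(p^{\ast})|>1$ whenever $p^{\ast}>\frac{m+1}{k+1}$. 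Second, for $p^{\ast}\le\frac{m+1}{k+1}$ it uses the identity $|w'(p)|=k\cdot\Pr\left(X_{k-1}(p)=m\right)$ together with the fact that at $p=\frac{m+1}{k+1}$ the mode of $X_{k-1}(p)$ is $\lfloor kp\rfloor=m$, and a mode of a distribution on $k$ values has probability exceeding $1/k$; unimodality of $|w'|$ with peak at $m/(k-1)\le p^{\ast}$ then transfers the bound to $p^{\ast}$. (The complementary case $p^{\ast}\le m/(k-1)$, where $w$ is still concave at $p^{\ast}$, is handled by comparing $w$ with the anti-diagonal through $(0,1)$ and $(1,0)$: after using the symmetry $m\mapsto k-1-m$ to force $p^{\ast}\le\tfrac12$, $w$ crosses that line at some $q^{\ast}<p^{\ast}$ with slope below $-1$, and concavity gives $w'(p^{\ast})<w'(q^{\ast})<-1$.) Your boundary cases $r\in\{0,k-1\}$ are fine, but without an argument of this kind for the interior exponents the proof is incomplete.
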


It turns out that Theorem \ref{thm:unstableinterior_ASD} is implied by a new general property of binomial distributions, which may be of independent interest. Proposition \ref{pro-binomial} states that there do not exist any two different probabilities $p\neq q\in(0,1)$ such that if the success probability in each trial is $p$, the probability of obtaining at most $m$ successes is $q$, and the same holds when swapping $p$ and $q$. 
This is formalized as follows. (Recall that $X(k,p)$ denotes a \emph{binomial random variable} with $k\geq1$ trials and probability of success $p$ in each trial.)

\begin{prop} \label{pro-binomial}
Fix any $k\geq2$ and any $0\leq m<k$. Then: 
\[
\Pr\left(X(k,p)\leq m\right)=q\in\left(0,1\right),\,\Pr\left(X(k,q)\leq m\right)=p\,\,\Rightarrow\,\, p=q.
\]
\end{prop}

The proof of Proposition \ref{pro-binomial} is presented in Appendix \ref{sub:binomial-proofs}.

\begin {proof}[Proof of Theorem \ref{thm:unstableinterior_ASD}] Each pair of parameters $g,l$ induces a threshold $0 \leq m < k$, such that playing hawk is the best reply against a sample of size $k$ iff the sample includes at most $m$ hawkish actions. That is, $w_k(p_1)$ must coincide with the curve of $\Pr\left(X(p_1,k)\leq m\right)$ for some $0 \leq m < k$.
This implies that an interior state $\textbf{p}$ is stationary iff $\Pr\left(X(p_1,k)\leq m\right)=p_2$ and $\Pr\left(X(p_2,k)\leq m\right)=p_1.$ Thus, Proposition \ref{pro-binomial} implies that there are no asymmetric interior stationary states. Corollary \ref{cor:theta-1-action} implies that the pure stationary states are asymptotically stable, and that the curve  $w(p_1)$ is above (resp., below) the curve $w^{-1}(p_1)$ near state $(1,0)$ (resp., $(0,1)$). This implies that curve $w(p_1)$ must be above (resp., below) the curve $w^{-1}(p_1)$ in a left (right) environment of the symmetric stationary state, which implies that this stationary state must be unstable, as illustrated in the left panel of Figure \ref{figure1}. Another, direct argument for that the symmetric stationary state is unstable is implied by combining Proposition \ref{pro-binom-deriv>1} ($\left|w'_k(p)\right|=\left|\frac{d\Pr\left(X(k,p)\leq m\right)}{dp}\right|>1$) and Proposition \ref{prop-symmetric-stablity} (state $(p,p)$ is asymptotically stable if $|w'_k(p)|>1$).  
\end{proof}

\section{Stability of the Symmetric Stationary State}\label{sec:local stability mixed states}
In this section, we present two sufficient conditions for the asymptotic stability of \emph{symmetric} mixed stationary states. Symmetric stationary states are qualitatively very different 
from pure stationary states. 
In particular, in symmetric states both 
populations have the same expected payoff, and 
the bargaining fails frequently 
(whenever two hawks are matched together). By contrast, \emph{asymmetric} mixed stationary states 
might be 
similar to pure stationary states in the sense of having $p_1$ close to 1, $p_2$ close to 0, and a large payoff difference between the two populations.
\subsection{Auxiliary Result for the Stability of  Symmetric States}
We begin with a simple auxiliary result showing that the symmetric stationary state is asymptotically stable iff the derivative of $w_\theta (p)$ is larger than one. Formally
\begin{prop}
Symmetric stationary state $(p,p)$ is unstable if $|w'_\theta (p)|>1$.\label{prop-symmetric-stablity}
\end{prop}
\begin{proof}
In order to assess the asymptotic stability, we compute the Jacobian $J$ of $\dot{p_i}=\delta\cdot(w_\theta(p_j)-p_i)$ at the symmetric rest point $(p, p)$ 
(ignoring the constant $\delta$, which plays no role in the dynamics, other than multiplying the speed of convergence by a constant): 
$$
J =  \left( \begin{array}{cc}
-1 & w'_\theta(p)  \\
w'_\theta(p) & -1  \end{array} \right).
$$
The eigenvalues of $J$ are $-1+w'_\theta(p)$ and $-1-w'_\theta(p).$ A sufficient condition for instability at $(p,p)$ is that there exists a positive eigenvalue, which is implied by $|w'_\theta(p)| > 1.$ 
\end{proof}
\subsection{Action-Sampling
Dynamics\label{subsec:Instability-of-symemtric-action-dyanmics}}
Our next result shows that the symmetric mixed stationary state is often asymptotically stable under the action-sampling dynamics if the population is heterogeneous, such that some agents have a relatively small samples, while the other agents have large samples.
Specifically, the symmetric stationary state is asymptotically stable for any population in which a positive, yet sufficiently small, share of the population have sample $k$, the remaining agents have sufficiently large samples, and either $g$ is sufficiently small or $l$ is sufficiently large.

\begin{thm}\label{thm:stableinterior_ASD}
Fix any $k>1$ and any $0<q<\frac{1}{k}$:
\begin{enumerate}
    \item For any $0<l<1$, there exists $0<\overline g<1$ and $\overline k\in \mathbb{N}$, such that the symmetric stationary state is asymptotically stable under the action-sampling dynamics for any $0<g<\overline g$ and any distribution $\theta$ satisfying $\theta(k)=q$ and $\sum_{n\geq\overline{k}}\theta(n)=1-q.$
     \item For any $0<g<1$, there exists $0<\overline l<1$ and $\overline k\in \mathbb{N}$, such that the symmetric stationary state is asymptotically stable under the action-sampling dynamics for any $\overline l<l<1$ and any distribution $\theta$ satisfying $\theta(k)=q$ and $\sum_{n\geq\overline{k}}\theta(n)=1-q.$
\end{enumerate}
\end{thm}

\begin {proof}[Sketch of Proof] 
To simplify the sketch, we treat the  agents with large samples ($n\geq\overline{k}$) as best responders (i.e., playing the best reply to the population's true distribution).
For sufficiently small $g$ (resp., sufficiently large $l$), the symmetric stationary state $p$ has the property that only agents with sample size $k$ play hawk (resp., dove), while all the agents with large samples play dove (resp., hawk), and they continue to do so even after a small perturbation in the share of hawks. Proposition \ref{prop-symmetric-stablity} 
implies that the symmetric stationary state $p$ is asymptotically stable if $|w'_\theta(p)|<1$. This latter inequality is implied by the fact that the best responders continue playing dove (resp., hawk) after any small perturbation and that $q$ is sufficiently small: $|w'_\theta(p)|=(1-q)\cdot0+q\cdot|w'_k(p)|<1.$ See Appendix \ref{subsec:stableinterior_ASDproof} for a formal proof. \qedhere
\end{proof}
\begin{figure}[h]
\centering
\caption{Illustrative Phase Plot for Theorem \ref{thm:stableinterior_ASD}: $g = l = 0.95,$ $\theta_2=0.3,$ and $\theta_{50}=0.7$}
\includegraphics[scale=0.75]{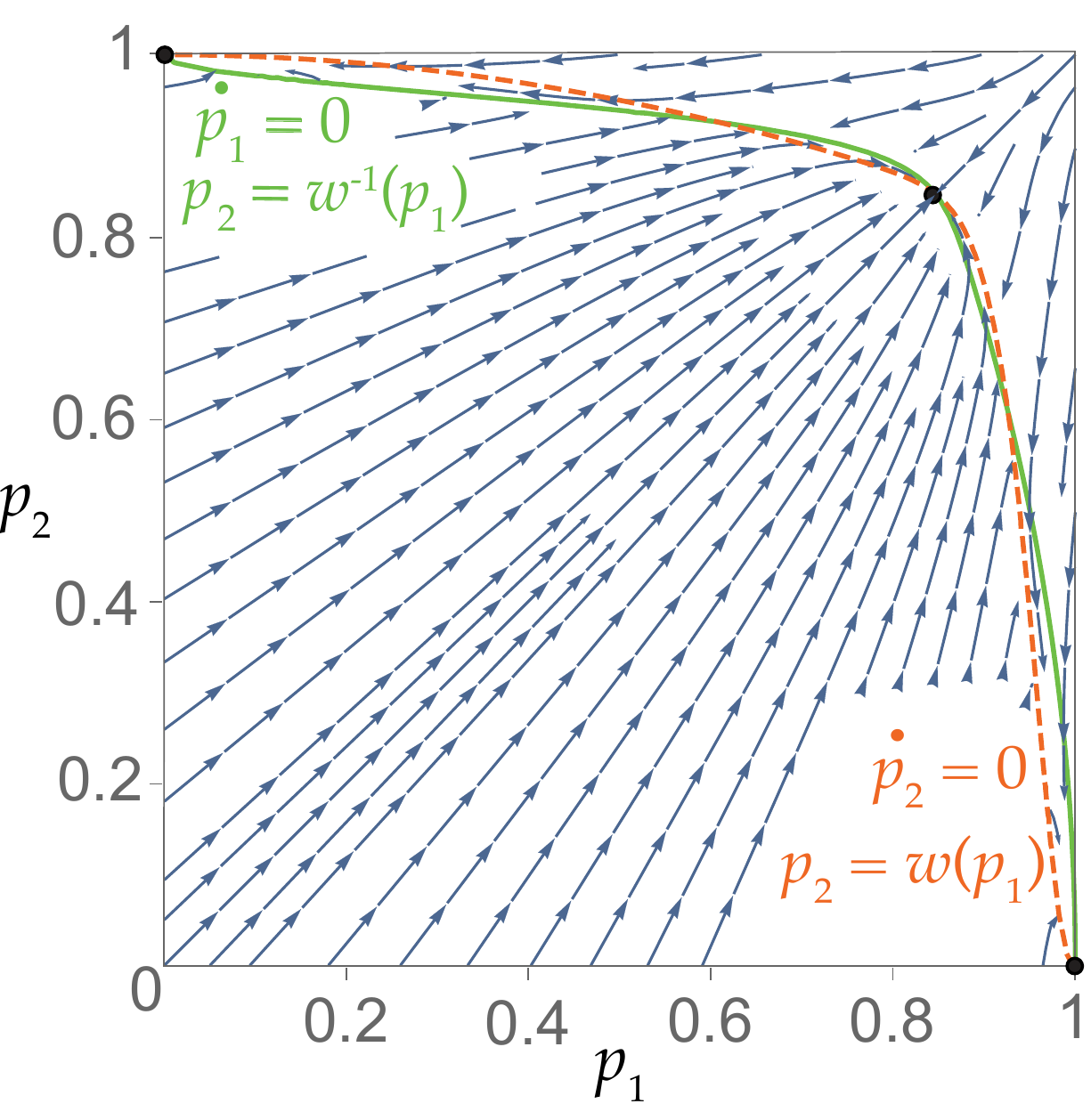}
\label{fig:fig-conv-to-sym}
\end{figure}
Figure \ref{fig:fig-conv-to-sym} illustrates the phase plot of a heterogeneous population that satisfies the conditions of Theorem \ref{thm:stableinterior_ASD}: the games is ``hawk-favorable'' with parameter $g=l=0.95$, $70\%$ of the agents have large samples of size 50, while the remaining $30\%$ of the agents have small samples of size 2. Numeric analysis shows that the basin of attraction of the symmetric stationary state in this example covers approximately $88\%$ of the unit square.  

It is straightforward to adapt Theorem \ref{thm:stableinterior_ASD} to show that the stationary symmetric state is asymptotically stable in various heterogeneous populations for the payoff-sampling dynamics as well. We omit the formal statement and proof for brevity.

Observe that
Theorem \ref{thm:stableinterior_ASD} implies that \emph{introducing a small share of agents with a finite sample size can stabilize the symmetric equilibrium of a population of (exact) best responders, as long as either $g$ is sufficiently small or $l$ is sufficiently large.}

\subsection{Payoff-Sampling
Dynamics\label{subsec:Stability-of-symmetric-payoff-dynamics}}
 In this subsection, we show that the symmetric stationary state is asymptotically stable under the payoff-sampling dynamics for homogeneous populations (and populations in which all agents have relatively small sample sizes).  For tractability in the analysis of payoff-sampling dynamics, we focus on the cases where the gain of a hawkish player and the loss of her dovish opponent are large, namely, $l,g>\frac{1}{\max (\textrm{supp}(\theta))}$. Our result shows that in this domain, the symmetric stationary state 
is asymptotically
stable in various populations in which agents have relatively small samples:
\begin{enumerate}
\item for any homogeneous  distribution of sample sizes $\theta\equiv k<20$; or
\item for any heterogeneous distribution with a maximal sample size of at most 5.
\end{enumerate}
The threshold of $k=20$ is binding. The symmetric stationary state becomes unstable if the sample size is $k \geq 20.$ By contrast, the bound of a maximal size of 5 for heterogeneous distributions of sample sizes is only a constraint of our proof technique. Numeric analysis (see Section \ref{sec:Numeric-Analysis}) suggests that the stability of the symmetric stationary state:
\begin{enumerate}
    \item holds for many distributions of types with larger maximal sample sizes (in particular, it holds for uniform distributions over 
$\left\{ 1,\dots,k\right\} $ for any $k\leq20$); and
    \item is often global (i.e., in many cases almost all initial states converge to this state).
\end{enumerate}
\begin{thm}\label{thm:stableinterior_PSD}
Assume that $l,g\in\left(\frac{1}{\max\left(\emph{supp}(\theta\right))},1\right)$, and either (1) $\theta\equiv k<20$, or (2) $\max\left(\emph{supp}(\theta\right))\leq5.$ Then, the game admits an asymptotically stable symmetric stationary state 
under the payoff-sampling dynamics. 
\end{thm}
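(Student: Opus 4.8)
The plan is to reduce asymptotic stability of the symmetric stationary state to a single scalar inequality on the slope of $w^{P}$, and then to establish that inequality from an exact expression for the derivative of a binomial tail. By Fact \ref{fact-w} the curve $p_{2}=w^{P}(p_{1})$ is strictly decreasing with $w^{P}(0)=1$, $w^{P}(1)=0$, so it meets its inverse at a unique symmetric point $(p^{*},p^{*})$ determined by $w^{P}(p^{*})=p^{*}$. Linearizing $\dot p_{i}=\delta\bigl(w^{P}(p_{j})-p_{i}\bigr)$ at $(p^{*},p^{*})$ gives the Jacobian $\delta\bigl(\begin{smallmatrix}-1 & w^{P\prime}(p^{*})\\ w^{P\prime}(p^{*}) & -1\end{smallmatrix}\bigr)$, whose eigenvalues are $\delta\bigl(w^{P\prime}(p^{*})-1\bigr)$ on the symmetric mode $(1,1)$ and $\delta\bigl(-1-w^{P\prime}(p^{*})\bigr)$ on the antisymmetric mode $(1,-1)$. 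Since $w^{P\prime}<0$ the first is always negative, so the state is asymptotically stable exactly when $\lvert w^{P\prime}(p^{*})\rvert<1$. This is the same reduction used in the sketches of Theorems \ref{thm:unstableinterior_ASD} and \ref{thm:stableinterior_ASD}, so the whole theorem amounts to showing $\lvert w^{P\prime}(p^{*})\rvert<1$ under each hypothesis, together with locating $p^{*}$ for the efficiency claim.

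Next I would obtain a workable formula for $w^{P\prime}$. For fixed $k$ the event ``$h$ is a strict best reply'' is the (state-independent) staircase $\{(x,y):x\le a_{k}(y)\}$, where $a_{k}(y)$ is the largest integer $x$ with $(1+g)x<gk+ly$, so that $P_{k}(p):=\Pr\bigl((1+g)X_{k}<gk+lY_{k}\bigr)=\sum_{y=0}^{k}b_{k}(y;p)\,\Pr\!\bigl(X_{k}\le a_{k}(y)\bigr)$ with $b_{k}(y;p)=\binom{k}{y}p^{y}(1-p)^{k-y}$. The binomial identities $\frac{d}{dp}\Pr(X_{k}\le m)=-k\,b_{k-1}(m;p)$ and $b_{k}'(y;p)=k\bigl(b_{k-1}(y-1;p)-b_{k-1}(y;p)\bigr)$ let me differentiate term by term and apply summation by parts, collapsing the result to the boundary of the staircase: a finite, sign-definite sum of products of binomial probabilities of orders $k$ and $k-1$. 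This yields a closed expression for $\lvert P_{k}'(p)\rvert$, and since every $P_{k}$ is decreasing, $\lvert w^{P\prime}\rvert=\sum_{k}\theta(k)\lvert P_{k}'\rvert$. This boundary representation is the binomial technique flagged in the introduction and is what makes the slope estimable.

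With this in hand the two cases are handled separately, and I expect the sharp slope estimate to be the main obstacle. For a homogeneous population $\theta\equiv k$ I must show $\lvert P_{k}'(p^{*})\rvert<1$ for all admissible $g,l$ and all $k<20$; the assertion that $k=20$ is binding signals that the estimate is tight, so a crude bound will not suffice. I would combine monotonicity of the boundary sum with a local central limit approximation of the binomial weights near $p^{*}$, reducing the claim to a finite, explicitly checkable list of configurations. For a heterogeneous population with $\max(\mathrm{supp}(\theta))\le5$, $w^{P\prime}(p^{*})=\sum_{k}\theta(k)P_{k}'(p^{*})$ is a convex combination of finitely many slope functions evaluated at the common point $p^{*}$; here I would bound each $\lvert P_{k}'\rvert$ uniformly over the admissible range of $p^{*}$ and over $g,l>1/\max(\mathrm{supp}(\theta))$, which is feasible precisely because only sizes $k\in\{1,\dots,5\}$ occur (explaining why the bound $5$ is a limitation of the method rather than of the phenomenon). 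The difficulty is that $\lvert P_{k}'\rvert$ is not monotone in $k$, that $p^{*}$ itself moves with $\theta$, and that the threshold is sharp, so the estimates must be quantitatively accurate rather than merely asymptotic.

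Finally, for the efficiency statement I would show $p^{*}\ge\tfrac12$, whence the probability that two matched agents are both hawks is $(p^{*})^{2}\ge\tfrac14$. As $w^{P}$ is strictly decreasing with a unique fixed point, $p^{*}\ge\tfrac12$ is equivalent to $w^{P}(\tfrac12)\ge\tfrac12$. Evaluating at $p=\tfrac12$ and using the symmetry $X_{k}\overset{d}{=}k-X_{k}$ of the symmetric binomial rewrites the stationarity probability as $w^{P}(\tfrac12)=\Pr\bigl((1+g)X_{k}+lY_{k}>k\bigr)$ for iid $X_{k},Y_{k}\sim\mathrm{Bin}(k,\tfrac12)$. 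The mean of the left-hand side is $\tfrac{k}{2}(1+g+l)$, which exceeds $k$ in the large-gain regime (i.e. when $g+l\ge1$); a median/symmetry argument for the symmetric binomial then delivers the probability bound $\ge\tfrac12$, and summing the $\theta(k)$-weighted versions gives $w^{P}(\tfrac12)\ge\tfrac12$ for the mixed population as well, completing the proof.
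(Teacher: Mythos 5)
Your opening reduction is exactly the paper's: linearize $\dot p_i=\delta(w(p_j)-p_i)$ at the symmetric fixed point, note the eigenvalues $-1\pm w'(p^{*})$, and conclude that asymptotic stability is equivalent to $|w'(p^{*})|<1$. From that point on, however, you miss the one idea that makes the theorem provable at all, and the machinery you substitute for it cannot close the argument. The hypothesis on $g$ and $l$ is not decorative: it is chosen precisely so that, for every sample size $k$ in the support, the integer comparison $(1+g)(k-X_k)>k-lY_k$ is \emph{exactly equivalent} to $k-X_k>\tfrac12(k-Y_k)$, i.e.\ $w_k(p)=P(2X_k(p)-Y_k(p)<k)$. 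This collapses your two-parameter family of staircases $\{(x,y):(1+g)x<gk+ly\}$ into a single $(g,l)$-free polynomial of degree at most $2k$ for each $k$, so that the entire theorem reduces to finitely many verifiable facts: one polynomial per $k\le 19$, plus convex combinations for $k\le 5$. You never invoke the payoff hypothesis for this purpose, so your plan requires bounding $|P_k'(p^{*})|$ uniformly over a continuum of $(g,l)$ and over the moving fixed point $p^{*}$ --- a claim that is strictly stronger than the theorem and for which you offer no mechanism.

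The second problem is quantitative. The paper's own computations give $|w_{19}'(p^{(19)})|\approx 0.991$ and $|w_{20}'(p^{(20)})|\approx 1.001$; the constant $1$ is missed by less than one percent at the edge of the admissible range. A ``local central limit approximation of the binomial weights near $p^{*}$'' cannot certify an inequality with that margin unless you supply explicit, non-asymptotic error bounds, which you do not; the paper instead evaluates the fixed points and derivatives of the explicit polynomials $P(2X_k-Y_k<k)$ directly (Table \ref{table:fixedpoints}) and verifies $|w_k'(p)|<1$ on the interval $(0.5,0.68)$ containing all fixed points for $k\le 5$. Finally, your efficiency argument has a gap: you reduce $p^{*}\ge\tfrac12$ to $w(\tfrac12)\ge\tfrac12$ and then to the mean of $(1+g)X_k+lY_k$ exceeding $k$, which needs $g+l\ge 1$; the stated hypothesis $g,l>1/\max(\mathrm{supp}(\theta))$ does not deliver this (e.g.\ $k=3$, $g=l=0.4$). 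Under the correct reduction the bound $p^{(k)}>\tfrac12$ comes out of the same finite computation, with no separate median argument needed.
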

\begin {proof}[Sketch of Proof] When $l$ and $g$ are sufficiently large, the payoff of action $h$ is slightly below twice the number of $d$-s in the $h$-sample, and the payoff of action $d$ is slightly above the number of $d$-s in the $d$-sample. This implies that action $h$ has a higher mean payoff than action $d$ iff the number of $d$-s in the $h$-sample is strictly greater than half the number of $d$-s in the $d$-sample. 
\begin{figure}[h]
\centering
\caption{The Function $w_k(p)$ for Various Values of $k$}
\includegraphics[scale=0.75]{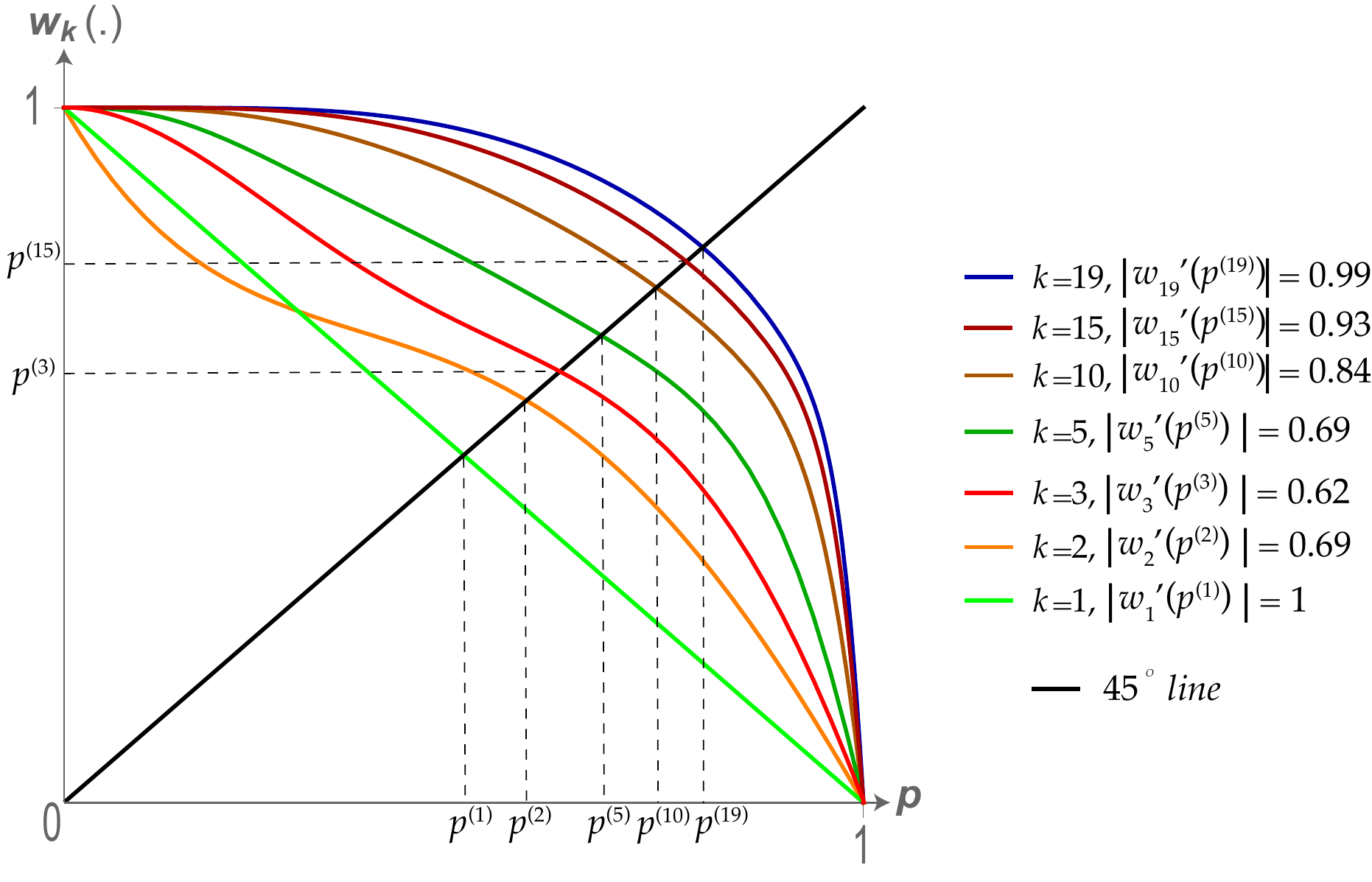}
\label{fig:proofbypicture}
\end{figure}

Thus, we can write $w_k(p)$ as follows:
\begin{equation}\label{eq:wkp-p}
w_{k}(p)=P\left(\underbrace{k-X(k,p)}_{\#d_{j}\,\textrm{in}\,h\textrm{\textrm{-}sample}}>\frac{1}{2}\underbrace{(k-Y(k,p))}_{\#d_{j}\,\textrm{in}\,d\textrm{-sample}}\right)=P(2X(k,p)-Y(k,p)<k),
\end{equation}
where $X(k,p)$ and $Y(k,p)$ are iid binomial random variables with parameters $k$ and $p.$ 

In the formal proof (see Appendix \ref{subsec:stableinterior_PSDproof}), we show that for any $k<20,$ $w_k(p)$ has a unique fixed point $p^{(k)}$ such that $|w_k'(p^{(k)})|<1$ (see Figure \ref{fig:proofbypicture}). This implies, by the same argument as in the sketch of proof of Theorem \ref{thm:unstableinterior_ASD}, that the symmetric stationary state is asymptotically stable. (By contrast, one can verify that $|w_k'(p^k)|>1$ for $k \geq 20,$ which implies that the symmetric stationary state is unstable for large $k\geq20$.) 

Next, we verify in the formal proof that for any $k \in \left\{1,2,3,4,5\right\}$ it holds that (I) the fixed points  are all in the interval $(0.5,0.68)$, and  (II) $|w_k'(p)|<1$ for any  $k \in \left\{1,...,5\right\}$ and any $p \in (0.5,0.68).$ Let $\theta$ be any distribution with $\max(\textrm{supp}(\theta)) \leq 5$. The fact that $w_\theta(p)$ is a weighted average of the various $w_k(p)$ implies that  (I) the fixed point $p^{(\theta)}$ of $w_\theta(p)$ is in $(0.5,0.68)$, and (II) $|w_\theta'(p^{(\theta)})|<1$ $\Rightarrow$ $(p^{(\theta)}, p^{(\theta)})$ is asymptotically stable.
\qedhere
\end{proof}

\section{Numeric Analysis\label{sec:Numeric-Analysis}}

We present numeric results that complement
the analytic results of the previous sections.

\paragraph{Methodology and Parameter Values}

The analysis focuses on the standard hawk--dove games, in which the gain of a hawkish player is equal to the loss of her dovish opponent, i.e., $g=l$
for each $i\in\left\{ 1,2\right\} $. We have tested the following $360=10 \times 27$ combinations of parameter values for each of the two sampling dynamics:
\begin{enumerate}
\item 10 values for $g$: 0.05, 0.15,
0.25, 0.35, 0.45, 0.55, 0.65, 0.75, 0.85, 0.95.
\item 36 distributions of sample sizes:
\begin{enumerate}
\item 9 \emph{homogeneous populations}, in which all agents have sample size $k$,
for each 
$k\in\{2,3,4,5,7,10,15,20,30\}$ (the case of $k=1$ is discussed in Footnote \ref{foot-k-1}). 
\item 9 \emph{uniform distributions} over $\left\{ 1,..,\overline{k}\right\} $, for each $\overline{k}\in\{2,3,4,5,7,10,15,20,30\}$.

\item 9 distributions with support $\{1,5\}$,
in which a share $q \in \left\{10\%, 20\%, ..., 90\%\right\}$ have sample size 1, and the remaining agents have sample size 5.
\item 9 distributions with support $\{2,30\}$,
in which a share $q \in \left\{10\%, 20\%, ..., 90\%\right\}$ have sample size 2, and the remaining agents have sample size 30.
\end{enumerate}
\end{enumerate}
For each set of parameters, we have numerically calculated the phase portrait and
the curves for which $\dot{p}{}_{1}=0$ and $\dot{p}{}_{2}=0$, and
used this to determine the dynamic behavior. 
The code is provided in the online supplementary material.\footnote{
Our numeric analysis is based on deterministic dynamics in a continuum population. We have randomly chosen 10 of these combinations of parameter values, and tested each of them by running it 100 times in the stochastic dynamics induced by  a finite population of 1,000 agents, using ABED software (\citealp{izquierdo2019introduction}). The results for finite populations are qualitatively the same.}
\begin{figure}\caption{Summary of Results of the Numeric Analysis}\label{fig:numeric_figure}
\begin{center}
\includegraphics[scale=0.67]{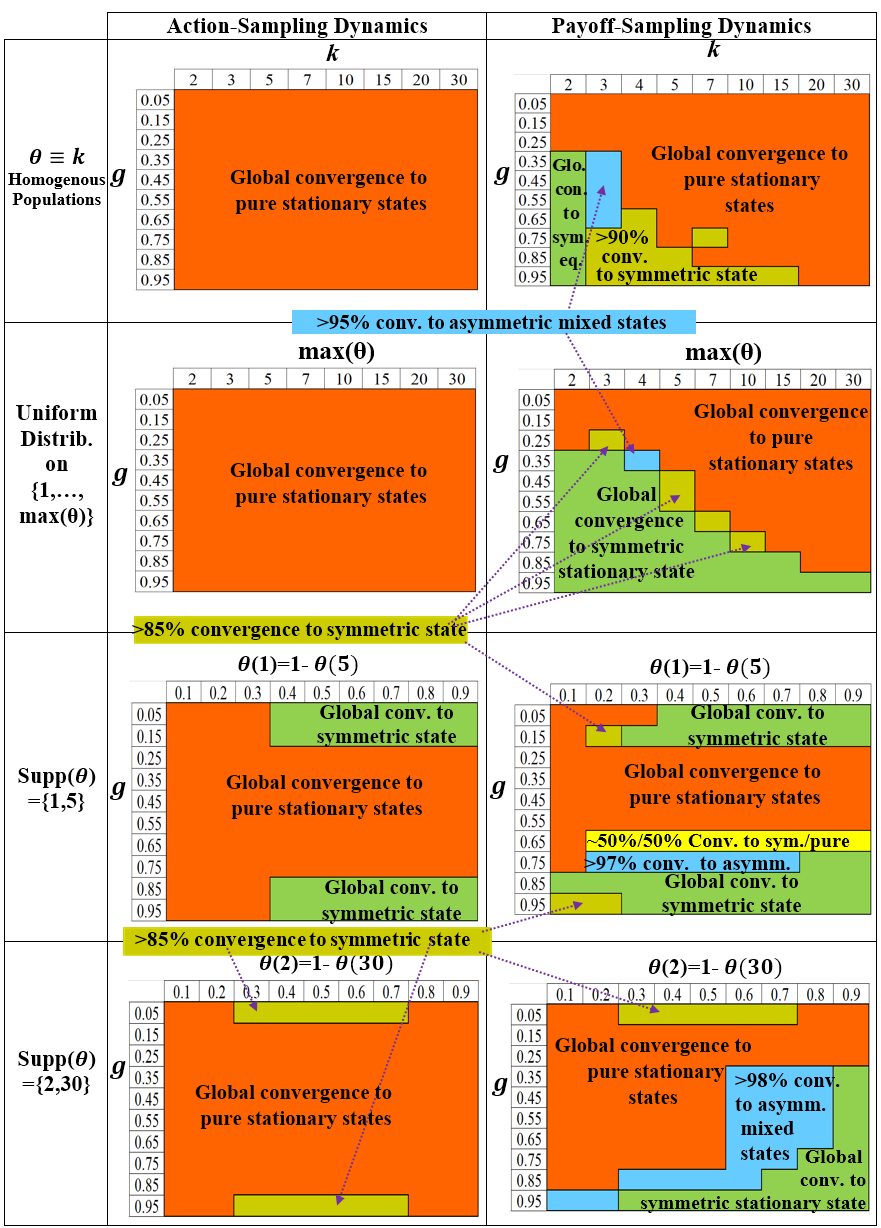}
 \end{center}
\end{figure}

\paragraph{Results}
The numeric results are summarized in Figure
\ref{fig:numeric_figure}. 
The \emph{action-sampling dynamics}  yield global convergence to the pure stationary states (orange shaded region in Figure
\ref{fig:numeric_figure}) in $90\%$ of the cases. In the remaining $10\%$ of the cases, the dynamics globally (or almost globally) converge to the symmetric stationary state (green shaded region), where these cases involve heterogeneous populations (some agents have small samples, and others have relatively large samples) and values of $g$ that are far from 0.5. 

The \emph{payoff-sampling dynamics} yield global convergence to the pure stationary states in about $50\%$ of the cases. In about $40\%$ of the cases the payoff-sampling dynamics globally (or almost globally) converge to the symmetric stationary state, where this occurs (1) in all cases in which the action-sampling dynamics converge to the symmetric stationary state, and (2) in many cases with large values of $g$. Finally, in the remaining $10\%$ of the cases the dynamics either converge to asymmetric mixed states (blue shaded region in Figure \ref{fig:numeric_figure}), or that both the pure stationary states and the symmetric stationary state have sizable basins of attractions (yellow shaded region).

\section{Conclusion}\label{sec:conclusion}
A key result in evolutionary game theory is that two populations that are matched to play a hawk--dove game converge to one of the pure equilibria from almost any initial state. We demonstrate that this result crucially depends on the revision dynamics being monotone. Specifically, we show that two plausible classes of  dynamics, in which
some agents base their chosen actions on sampling the actions of a few agents in the opponent population (action-sampling dynamics) or on sampling the payoffs of a few agents in their own population (payoff-sampling dynamics) can lead to the opposite prediction: convergence to a symmetric mixed stationary state. 

Our results provide a new explanation of why in bargaining situations, such as the motivating example of buying and selling houses, players in both populations tend to play hawkish strategies, and bargaining frequently fails.
Our model assumes that all players in each population have the same payoff matrix. Heterogeneity in the payoffs, and private information regarding one's payoff, are important aspects of many real-life bargaining situations. An interesting direction for future research is to apply the analysis of sampling dynamics in richer models that incorporate heterogeneous payoffs.  

\appendix
\section{Appendix\label{sec:Technical-Proofs}}
\subsection{Results for Binomial Random Variables}\label{sub:binomial-proofs}
In this section, we state and prove results for binomial random variables 
(result that to the best of our knowledge, do not appear in the existing literature). These results are used in  the proof of Theorem \ref{thm:unstableinterior_ASD}, and may also be of independent interest. Our first result shows that if the success probability $p$ in a single trial coincides with the probability of succeeding at most $m$ times out of $k$ trials, then the derivative of the latter probability with respect to $p$ must be larger than 1. This is formalized as follows. (Recall that $X(k,p)$ denotes a binomial random variable with $k\geq1$ trials and probability of success $p$ in each trial.)
\begin{prop} \label{pro-binom-deriv>1}
For any $k\geq2$ and $0\leq m<k$, there is a unique value $r_m\in(0,1)$ of $p$ for which 
$\Pr\left(X(k,p)\leq m\right)=p.$
Moreover, at $p=r_m$,
\[\left|\frac{d\Pr\left(X(k,p)\leq m\right)}{dp}\right|_{p=r_m}>1.
\]

\end{prop}
\begin{proof}

Fix some 
$k\geq 2$, and let 
\begin{equation}\label{eq:wkmp}
f_m(p)\equiv P(X(k,p) \leq m)=\sum_{i=0}^{m}\binom{k}{i}p^i(1-p)^{k-i}.
\end{equation}
We have
\begin{align*}
    f_m(p) + f_{k-m-1}(1-p) &= P(X(k,p)\leq m) + P(X(k,1-p)\leq k-m-1)\\
    &= P(X(k,p)\leq m) + P(X(k,p)\geq k- (k-m-1))\\
    &= P(X(k,p)\leq m) + P(X(k,p)\geq m+1) = 1.
\end{align*}
Let $r_m$ denote the fixed point of the function $f_m(\cdot),$ i.e., $f_m(r_m) =r_m.$ The fact that $f_{k-m-1}(1-p) = 1 - f_m(p)$ implies that $r_{k-m-1} = 1-r_m$ and $f_{k-m-1}'(r_{k-m-1}) = f_m'(r_m).$ Without loss of generality, we can therefore focus on analyzing the cases of $m$ for which $m \leq \lfloor\frac{k-1}{2}\rfloor.$ To complete the proof, we need to show that $|f_m'(r_m)| >1$ for $m \in \{0,1,\dots, \lfloor\frac{k-1}{2}\rfloor\}.$ In what follows, we show this.

We compute as follows:
\begin{align}\label{eq:wkmprime}
    f_{m}'(p) &= -\binom{k}{m}(k-m)p^m(1-p)^{k-m-1} = -k\binom{k-1}{m}p^m(1-p)^{k-m-1}\\
    f_{m}''(p) &= \binom{k}{m}(k-m)p^{m-1}(1-p)^{k-m-2}((k-m-1)p-m(1-p)).\nonumber
\end{align}
From the above computations, it follows that the function $f_m(\cdot)$ is concave for values of $p \leq p^{\ast}$ and convex for values of $p \geq p^{\ast}$ where $p^{\ast} = \frac{m}{k-1}.$ This is because $f_m''(p^{\ast}) = 0.$ Either the concave part or the convex part of the function $f_m(\cdot)$ intersects the $45^{\circ}$ line. Suppose that the concave part of the function $f_m(\cdot)$ intersects the $45^{\circ}$ line from the origin, i.e., $r_m \leq p^{\ast}.$ Since $m \leq \frac{k-1}{2},$ it follows that $r_m \leq 0.5.$ By concavity, we have
\[
f_m'(r_m)<\frac{f_m(0)-f_m(r_m)}{0-r_m}=-\left(\frac{1-r_m}{r_m}\right)\leq -1.\]
Therefore, we are done in cases where $r_m \leq p^{\ast}.$

We now consider the cases where $r_m > p^*$. By definition,
\begin{align}
   r_m &= (1-r_m)^k + \binom{k}{1}r_m(1-r_m)^{k-1} + \dots + \binom{k}{m}r_m^m(1-r_m)^{k-m}\label{eq:fixedpointpm}.
\end{align}
For $j = 0,1,2,\dots,m,$ let $a_j$ denote the $j^{\text{th}}$ term of the sum on the RHS of Eq. \eqref{eq:fixedpointpm}. For $j = 1,2,\dots, m,$ we compute as follows:
\begin{align*}
    &\frac{a_j}{a_{j-1}} = \frac{\binom{k}{j}r_m^j(1-r_m)^{k-j}}{\binom{k}{j-1}r_m^{j-1}(1-r_m)^{k-j+1}}=\left(\frac{k-j+1}{j}\right)\left(\frac{r_m}{1-r_m}\right)\\
    &\frac{a_j}{a_{j-1}}\geq 1 \iff (k-j+1)r_m \geq j(1-r_m) \iff r_m \geq \frac{j}{k+1}.
\end{align*}
Since $\frac{m}{k-1} > \frac{j}{k+1}$ for $j =1,2,\dots, m,$ we have $r \geq \frac{j}{k+1}$ and thus $a_j \geq a_{j-1}$. This implies that $a_j \leq a_m$ for $j=1,2\dots,m-1.$ By Eq.\eqref{eq:fixedpointpm}, we can thus infer the following:
\begin{equation}\label{eq:fixedpointinequality}
    r_m \leq (m+1)\binom{k}{m}r_m^m(1-r_m)^{k-m}.
\end{equation}
By Eqs. \eqref{eq:wkmprime} and \eqref{eq:fixedpointinequality}, we have
\[
|f_m'(r_m)| = (k-m)\binom{k}{m}r_m^m(1-r_m)^{k-m-1} \geq \left(\frac{k-m}{m+1}\right)\frac{r_m}{1-r_m}.
\]
Using the above set of equations, we can write a sufficient condition for $|f_m'(r_m)| > 1$  as follows:
\begin{equation}\label{eq:suffcondmodgeq1}
\left(\frac{k-m}{m+1}\right)\frac{r_m}{1-r_m} > 1 \iff r_m > \frac{m+1}{k+1}.
\end{equation}
We will now establish that $|f_m'\left(\frac{m+1}{k+1} \right)| > 1.$ By Eq. \eqref{eq:wkmprime}, we have
\begin{align*}
    |f_m'(p)| &= k\binom{k-1}{m}p^m(1-p)^{k-m-1} = k\cdot \Pr (X(k-1,p) = m).
\end{align*}
where $X(k-1,p)$ is a binomial distribution with parameters $k-1$ and $p.$ It is well known that the binomial distribution's mode with parameters $k-1$ and $p$ is attained at $\lfloor kp \rfloor.$ For $p = \frac{m+1}{k+1},$ we have 
\[
m \leq  k\cdot \left(\frac{m+1}{k+1}\right)  < m+1.
\]
The above inequalities imply that 
$\left \lfloor k\cdot \left(\frac{m+1}{k+1}\right) \right \rfloor = m.$ The binomial distribution $X_{k-1}(\cdot)$ has $k$ possible values and thus the probability of the occurrence of the mode has to be greater than $\frac{1}{k},$ i.e.,
\[
\Pr\left(X\left(k-1,\frac{m+1}{k+1}\right)= m \right) > \frac{1}{k} \implies \left|f_m'\left(\frac{m+1}{k+1} \right)\right| > 1.
\]
We need to consider the following two possible cases:

\textbf{Case 1:} $\frac{m}{k-1} < r_m \leq \frac{m+1}{k+1}$. 
For $1\leq m \leq \frac{k-1}{2},$ we know that $|f_m'(\cdot)|$ attains its maximum value at $\frac{m}{k-1}$ and that it is strictly decreasing for $p > \frac{m}{k-1}.$ For $m=0,$ $|f_m'(\cdot)|$ is strictly decreasing for $p>0.$ Thus, we have
\[
|f_m'(r_m)| > \left|f_m'\left(\frac{m+1}{k+1} \right)\right| > 1.
\]

\textbf{Case 2:} $r_m > \frac{m+1}{k+1}.$ Here, we are done by the sufficient condition of Eq. \eqref{eq:suffcondmodgeq1}.

\end{proof}
Recall that Proposition \ref{pro-binomial} states that there are no $p\neq q\in(0,1)$, such that $\Pr\left(X(k,p)\leq m\right)=q\in\left(0,1\right)$ and $\Pr\left(X(k,q)\leq m\right)=p.$ By contrast, Theorem \ref{thm:stableinterior_ASD} implies that this result does not hold if one replaces the binomial distribution with a mixture of binomial distributions. Formally:
\begin{cor}
For any $k\geq2$ and any $\alpha<\frac{1}{k}$, there exist $0\leq m<k$, $0\leq m'<k'$, and $p\neq q \in (0,1)$, such that 
$\alpha\cdot\Pr\left(X(k,p)\leq m\right)+(1-\alpha)\cdot \Pr\left(X(k',p)\leq m'\right)=q$\\
\hspace*{3.85cm}and  $\alpha\cdot\Pr\left(X(k,q)\leq m\right)+(1-\alpha)\cdot \Pr\left(X(k',q)\leq m'\right)=p.$
\end{cor}

We conclude this appendix with a proof of Proposition  \ref{pro-binomial}:  \begin{proof}[Proof of Proposition \ref{pro-binomial}]
Fix $k\geq 2$ and $m<k$, and for the same reason as in the proof of Proposition \eqref{pro-binom-deriv>1}, we can assume without loss of generality that $m\leq \lfloor\frac{k-1}{2}\rfloor.$  Let $f(p)\equiv\Pr\left(X(k,p)\leq m\right)$, and $r$ be the value of $p$ for which $f(r)=r$. Let $F(p)\equiv f(f(p))$. It is easy to see that if $f(p_1)=p_2$ and $f(p_2)=p_1$, then both $p_1$ and $p_2$ are fixed points of $F$. Moreover, since $f$ is decreasing, if $p_1<r$, then $p_2>r$. Therefore, it is enough to show that $F$ has no fixed points in $(0,r)$.

We claim first that if $p^*\equiv\frac{m}{k-1}\geq r$, then there is no fixed point $p_1<r$ for which $p_2\equiv f(p_1)\leq p^*$. In other words, $F$ has no fixed point in the range $[f^{-1}(p^*),r)$. To see this, assume by contradiction that there are such  fixed points  $p_1$ and $p_2$. Recall that the function $f$ is concave on $(0,p^*)$, and since $r<0.5$ and $|f'(r)|>1$ (Proposition \eqref{pro-binom-deriv>1}), the absolute value of the slope of the line connecting the points $(0,1)$ and $(p_2,p_1)$ is larger than one. Because of the concavity of $f$ on this range, the graph of $f$ is above this line, and the absolute value of the slope of the line connecting $(p_1,p_2)$ with $(p_2,p_1)$ (which is obviously equal to one) must be even larger of that of the line 
connecting the points $(0,1)$ and $(p_2,p_1)$, i.e., larger than 
$1=\left|\frac{p_1-p_2}{p_2-p_1}\right|$, a contradiction.


Now, the first and second derivatives of $F$ are
\begin{align*}
    F'(p) &=f'(f(p))\cdot f'(p)\\
    F''(p)&=f''(f(p))\cdot f'(p)^2+f'(f(p))\cdot f''(p).
\end{align*}
To complete the case of $p^*\geq r$ it is left to show that there are no fixed points of $F$ on (0,$f^{-1}(p^*)$). Indeed, for all $p\in (0,f^{-1}(p^*))$, $f''(f(p))>0$, 
which implies that $F''(p)>0$ (as $f''(p)<0$). Thus, on $(0,f^{-1}(p^*))$, $F(x)$ is convex and since $F(0)=0$, in this range there could be at most one fixed point of $F(p)$. Now, as was shown before, there are no fixed points of $F$ in  $[f^{-1}(p^*),r)$, and by Proposition \eqref{pro-binom-deriv>1}, $F'(r)>1$. This implies that the graph of $F(p)$ must be below the diagonal for any $p\in [f^{-1}(p^*),r)$. This could not happen if $F(p)$ were above the diagonal at some $p\in (0,f^{-1}(p^*))$.
We conclude that there is no such fixed point, and this completes the proof for the case of $p^*\geq r$.

Now assume $p^*<r.$ In this case, we will show that $F$ is convex on $(0,r)$. On the range $(0,p^*)$, $f''(f(p))>0$ and $f''(p)<0$, implying that $F''(p)>0$ and therefore $F$ are convex. On the range $(p^*,r)$, the condition for convexity is 
$$f''(f(p))\cdot f'(p)^2+f'(f(p))\cdot f''(p)>0, \textrm{ or}$$
$$\frac {f''(f(p))}{-f'(f(p))}>\frac {f''(p)}{f'(p)^2}.$$
Since $|f'(r)|>1$ (Proposition \ref{pro-binom-deriv>1}), $|f'(p)|>1$ for any $p\in(p^*,r)$, and therefore it is enough to show that for $p\in(p^*,r)$, 
$$\frac {f''(f(p))}{-f'(f(p))}>\frac {f''(p)}{-f'(p)}.$$
Let $G(p)\equiv \frac {f''(p)}{-f'(p)}$, and we will show that for any $p_1\in(p^*,r)$, $G(p)$ increases on $(p_1,f(p_1))$. Indeed, setting the explicit expressions of Eq. \eqref{eq:wkmprime} in $G$, we have  
$$G(p)=\frac{(k-1)p-m}{p(1-p)}=\frac{k-1-m}{1-p}-\frac{m}{p},$$
and since $m<k-1$, $G$ is increasing.

To summarize: since $F(0)=0$ and $F(r)=r$, when $p^*<r$, the convexity of $F$ on $(0,r)$ that we just showed implies that $F$ does not intersect the diagonal on $(0,r)$, and this completes the proof. 

\end{proof}
\subsection{Standard Definitions of Dynamic Stability}\label{sub-standard-definitions}
For completeness, we present in this appendix  the standard definitions of dynamic stability that is used in the paper (see, e.g., \citealp[Chapter 5]{weibull1997evolutionary}).

A state is said to be 
stationary if
it is a rest point of the dynamics.
\begin{defn}
\label{def:equilbirium}State $\mathbf{p}^{*}\in\left[0,1\right]^{2}$
is a \emph{stationary state}
if $w_{i}\left(\mathbf{p}^{*}\right)=p_{i}^{*}$
for each $i\in\left\{ 1,2\right\}$.
\end{defn}
Let $\mathcal{E}\left(w\right)$ denote the set of stationary states of $w,$ i.e., $\mathcal{E}\left(w\right) = \{\mathbf{p}^{*} | w_{i}\left(\mathbf{p}^{*}\right)=p_{i}^{*}\}.$ Under monotone dynamics,  an interior (mixed) state $\mathbf{p}^*\in (0,1)^2$ is a stationary state iff it is a Nash equilibrium (\citealp[Prop. 4.7]{weibull1997evolutionary}). By contrast, under nonmonotone dynamics (such as the sampling dynamics analyzed below) the two notions differ.

A state is Lyapunov stable if a population beginning near it remains close, and it is asymptotically stable if, in addition, it eventually converges to it. A state is
unstable if it is not Lyapunov stable. It is well known (see, e.g., \citealp[Section 6.4]{weibull1997evolutionary})
that every Lyapunov stable state must be a stationary state. Formally:

\begin{defn}
\label{def:lyaponouv-stability} A stationary state $\mathbf{p}^{*}\in\left[0,1\right]^{2}$
is \emph{Lyapunov stable} if for every neighborhood $U$ of $\mathbf{p}^{*}$
there is a neighborhood $V\subseteq U$ of $\mathbf{p}^{*}$ such that if the
initial state $p\left(0\right)\in V$, then $\mathbf{p}\left(t\right)\in U$
for all $t>0$. A state is \emph{unstable} if it is not
Lyapunov stable.
\end{defn}

\begin{defn}
A stationary state $\mathbf{p}^{*}\in\left[0,1\right]^{2}$
is \emph{asymptotically stable} if it is Lyapunov stable and there is some neighborhood $U$
of $\mathbf{p}^{*}$ such that all trajectories initially in $U$ converge
to $\mathbf{p}^{*},$ i.e., $\mathbf{p}\left(0\right)\in U$ implies $\lim_{t\rightarrow\infty}\mathbf{p}\left(t\right)=\mathbf{p}^{*}$.
\end{defn}

\subsection{Proof of Proposition \ref{pro-convergence-to-stationary} (Convergence to Stationary States) 
}\label{proof-convergence-to-stationary}
We say that state $\textbf{p}$ is \emph{above} (resp., \emph{below}) curve $w(p_1)$ if $p_2>w(p_1)$ (resp., ($p_2<w(p_1)$). Similarly, we say that state $\textbf{p}$ is to the \emph{right} (resp., \emph{left})  of the curve $w(p_1)$ if $w^{-1}(p_2)<p_1$ (resp., $w^{-1}(p_2)<p_1$). We say that the state $\textbf{p}$ is on the curve $w(p_1)$ if $p_2=w(p_1).$

The fact that the two curves of $w(p_1)$ and $w^{-1}(p_1)$ are strictly decreasing implies that a state is above one of these curves iff it is to the right of that curve. The states on the curve $w(p_1)$ (resp., $w^{-1}(p_1)$) are characterized by having $\dot{p_1}=0$ (resp., $\dot{p_2}=0$). Observe that in any state $\textbf{p}$ above and to the right (resp., below and to the left) of the curve $w(p_1)$, the share of hawks in population 2 is higher relative to the corresponding state on the curve with the same share of hawks in population 1, which implies that $\dot{p_1}<0$ (resp., $\dot{p_1}>0$). Similarly, in any state $\textbf{p}$ above (resp., below) the curve $w^{-1}(p_1)$, the  share of hawks in population 1 is higher relative to the corresponding state on the curve with same share of hawks in population 2, which implies that $\dot{p_1}<0$ (resp., $\dot{p_1}>0$).

Any state $\textbf{p}\in[0,1]$ can be classified in one of 3*3 classes, depending on its relative location with respect to the two curves, i.e., whether $\textbf{p}$ is below (i.e., to the left of), above (i.e., to the right of) or on the curve $w(p_1)$, and similarly whether $\textbf{p}$ is below, above, or on the curve $w^{-1}(p_1)$. 
The above argument implies that this classification determines the signs of $\dot {p_1}$ and $\dot {p_2}$. If state $\textbf{p}$ is on (resp., above, below) the curve $w(p_1)$, then $\dot {p_2}$ is zero (resp., negative, positive). Similarly, if state $\textbf{p}$ is on (resp., above, below) the curve $w^{-1}(p_1)$, then $\dot {p_1}$ is zero (resp., negative, positive).

In particular, any state  $\textbf{p}$ that is above (resp., below) both curves must satisfy that $\dot {p_1},\dot {p_2}$ are both negative (resp., positive). This implies that any trajectory that begins  above (resp., below) both curves (namely, $w(p_1)$ and $w^{-1}(p_1)$) must always move downward and to the left. This (together with the fact that all stationary states are in the intersection of the two curves) implies  that the trajectory must cross one of the curves. If this crossing point is an intersection of both curves, then it is a stationary state.

Next observe that any trajectory that starts on one of the curves and strictly above/below the remaining curve must move to a state that is strictly below one of the curves and strictly above the remaining curve. This is so because on the crossing point one of the $\dot {p_i}$ is zero and the remaining derivative $\dot{p_j}$ is negative (resp., positive), which implies that the trajectory must move below and to the left (resp., above and to the right) of the curve that was crossed.

Thus, we have established that any trajectory (which has not converged to a stationary state) must reach a state that is strictly below one of the curves and strictly above the remaining curve. If such a state is strictly above the curve $w(p_1)$, then by the classification mentioned above the trajectory must move both downward and to the right; i.e., $\dot{p_1},\dot{p_2}<0$ (resp., both upward and to the left; i.e., $\dot{p_1},\dot{p_2}>0$). This implies that the trajectory must cross one of the curves. The crossing point cannot be only on the curve of $w(p_1)$ (resp., $w^{-1}(p_1)$), because at such a point the trajectory moves horizontally to the left (vertically downward), which implies that it must cross the lower (resp., higher)  curve $w(p_1)$ (resp.,$w^{-1}(p_1)$) from the left side (resp., from above) and we get a contradiction. The argument in case the state is strictly below the curve $w(p_1)$ is analogous.
 
\subsection{Proof of Proposition \ref{pro-convergence-to-pure-iff-asymp} (Convergence to Pure Equilibria)}\label{proof-converge-pure-iss-asymp}
 If $\textbf{p}$ is below  (resp., above) both curves, then by the classification presented above it must that $\dot{p_1}>0$ (resp., $\dot{p_2}<2$), which implies (by a similar argument to the proof of Proposition \ref{pro-convergence-to-stationary}) that convergence to (1,0) is possible only if the trajectory passes through a state that is strictly between the two curves, and that the closest intersection point of the two curves to the left of this state is (1,0). By the classification presented above it must be that the curve of $w^{-1}(p_1)$ is strictly below the curve of $w(p_1)$ in a right neighborhood of (1,0), which implies that (1,0) is asymptotically stable because any sufficiently close initial state would converge to (1,0)
\subsection {Proof of Lemma \ref{lemma-sample} (Maximal Sample Sizes)}\label{lem-proof}

\begin{enumerate}
\item (I) The sum of payoffs of action $h$ (resp., $d$) against
a sample with a single $d$ is $1+g$ (resp., $1+\left(k-1\right)(1-l)$).
The mean payoff of $h$ is strictly higher than 
the mean payoff of $d$ iff
$k<\frac{1+g-l}{1-l}.$
(II) The sum of payoffs of action $d$ (resp., $h$) against
a sample with a single $h$ is $k-1+1-l$ (resp., $\left(k-1\right)\left(1+g\right)$).
The mean payoff of $h$ is higher than 
the mean payoff of $d$ iff $k<\frac{1+g-l}{g}$. 
\item (I) The sum of payoffs of an $h$-sample with a single $d$ is equal to $1+g$. The sum of payoffs of a $d$-sample with no $d$ is equal to $k\cdot l$. The former sum is
higher than the latter iff $k<\frac{1+g}{1-l}.$
(II) The sum of payoffs of a $d$-sample with no $h_{j}$-s is
equal to $k$. The sum of payoffs of an $h$-sample with a single
$h_{j}$ is equal to $(k-1)\left(1+g\right)$. The former sum
is higher than the latter iff $k<\frac{1+g}{g}$.\qedhere
\end{enumerate}

\subsection {Proof of Proposition \ref{prop:pure-equilbiria-satbility} (Stability of Pure Equilibria)}\label{subsec:Proof-of-Theorem-2}
We are interested in deriving conditions for the stability of the pure stationary states. 
In what follows,
we compute the Jacobian of the sampling dynamics for the state 
in which all agents of population $i$ play $d$ and all agents of population $j$ play $h$. For this, we consider a slightly perturbed state with a ``very small'' $\epsilon_i$ share of hawks in population $i$ and a ``very small'' $\epsilon_j$ share of doves in population $j.$ By ``very small,'' we mean that higher-order terms of $\epsilon_i$ and $\epsilon_j$ are neglected.

 Consider a new agent of population
$i$ with a sample size of $k_{i}.$ Action $h$ has a higher mean
payoff against a sample size of $k_{i}$ iff (neglecting rare events
of having multiple $d$-s in the sample): (1) the sample includes the single action $d$ of an opponent, and (2) $k_{i}\leq m_{h}$
 (due to Lemma \ref{lemma-sample}). The probability of (1) is $k_{i}\cdot\epsilon_{j}+o(\epsilon_{j})$,
where $o(\epsilon_{j})$ denotes terms that are sublinear in $\epsilon_{j}$,
and, thus, it will not affect the Jacobian as $\epsilon_{j}\rightarrow0$.
This implies that the probability that a new agent of population $i$
(with a random sample size distributed according to $\theta$)
has a higher mean payoff for action $h$ against her sample is
$w_{\theta}(1-\epsilon_{j})=\epsilon_{j}\cdot\sum_{k_{i}=1}^{m_{h}}\theta(k_i)\cdot k_{i}+o(\epsilon_{j})$.
An analogous argument implies that the probability that a new agent
of population $j$ has a higher mean payoff for action $d_{j}$ against
her sample is $w_{\theta_{j}}(\epsilon_{i})=\epsilon_{i}\cdot\sum_{k_{j}=1}^{m_{d}}\theta_{j}(k_j)\cdot k_{j}+o(\epsilon_{i})$.
Therefore, the sampling dynamics at $(\epsilon_{i},1-\epsilon_{j})$
can be written as follows (ignoring the higher-order terms
of $\epsilon_{i}$ and $\epsilon_{j}$):
\begin{equation}
\dot{\epsilon}_{i}=\epsilon_{j}\cdot\sum_{k_{i}=1}^{m_{h}}\theta(k_i)\cdot k_{i}-\epsilon_{i},\,\,\,\,\,\,\,\,\,\,\,\,\dot{\epsilon}_{j}=\epsilon_{i}\cdot\sum_{k_{j}=1}^{m_{d}}\theta_{j}(k_j)\cdot k_{j}-\epsilon_{j}.\label{eq:a_b_theta}
\end{equation}
Define: $a_{\theta}=\sum_{k_{i}=1}^{m_{h}}\theta(k_i)\cdot k_{i}$
and $b_{\theta}=\sum_{k_{j}=1}^{m_{d}}\theta_{j}(k_j)\cdot k_{j}$.
The Jacobian of the above system of Equations \eqref{eq:a_b_theta}
is then given by $J=\left(\begin{array}{cc}
-1 & a_{\theta}\\
b_{\theta} & -1
\end{array}\right).$ The eigenvalues of $J$ are $-1-\sqrt{a_{\theta}b_{\theta}}$
and $-1+\sqrt{a_{\theta}b_{\theta}}.$ Observe that:
(1) if $a_{\theta}b_{\theta}<1$ then both eigenvalues
are negative, which implies that the state in which
all agents in population $i$ (resp., $j$) are doves (resp., hawks) is asymptotically
stable, and (2) if $a_{\theta}b_{\theta}>1$ then
one of the eigenvalues is positive, which implies that this state is unstable (see, e.g., \citealp[Theorems 1--2 in Section 2.9]{perko2013differential}).

\subsection{Proof of Corollary \ref{cor:g=l}\label{proof-cor-g=l} (Standard Case of $g=l$)}
Substituting $g=l$ in  the bounds presented in Definition \ref{def:maximal-samples} yields the following bounds: 
 $\,m_{h}^{A}=\left\lfloor \frac{1}{1-g}\right\rfloor ,\,m_{d}^{A}=\left\lfloor \frac{1}{g}\right\rfloor,
 m_{h}^{P}=\left\lfloor \frac{1+g}{1-g}\right\rfloor ,\,m_{d}^{P}=\left\lfloor \frac{1+g}{g}\right\rfloor.$
 Observe that $\min(m_{h}^{A},m_{d}^{A})=\min( \frac{1}{1-g}, \frac{1}{g})=1$. This implies that  
 $\mathbb{E}_{\leq m_{h}^A}\left(\theta\right)\cdot \mathbb{E}_{\leq m_{d}^A}\left(\theta\right)=\theta(1)\cdot\mathbb{E}_{\leq \max\left(\frac{1}{g},\frac{1}{1-g}\right)}\left(\theta\right),$ which proves part 1.
 Next observe that if $g<\frac{1}{3}$, then  $\min(m_{h}^{P},m_{d}^{P})=\min(\frac{1+g}{1-g},\frac{1+g}{g})=1$ and $\max(m_{h}^{P},m_{d}^{P})=\max(\frac{1+g}{1-g},\frac{1+g}{g})=\frac{1+g}{g}$, which proves part (2-a). Finally, observe that if $g\geq\frac{1}{3}$, then 
 $\min(m_{h}^{P},m_{d}^{P})=\min(\frac{1+g}{1-g},\frac{1+g}{g})=2$ and $\max(m_{h}^{P},m_{d}^{P})=\max(\frac{1+g}{1-g},\frac{1+g}{g})=\max(3,\frac{1+g}{1-g})$, which proves part (2-b).

\subsection{Proof of Theorem \ref{thm:stableinterior_ASD} (Stable Symmetric State, ASD)}\label{subsec:stableinterior_ASDproof}
The following notation will be helpful. For each $k>1$ and $q\in(0,1)$, let $\underline{p}_k^q$ (resp., $\overline{p}_k^q$) be the unique solution in $(0,1)$ to the equation $p=q\cdot(1-p)^k$ (resp., $p=1-q\cdot p^k$). We interpret $\underline{p}_k^q$ (resp., $\overline{p}_k^q$) as the symmetric stationary state when $g$ and $l$ are close to zero (resp., close to one) in a population in which a share $q$ of the population have samples of size $k$, while the remaining agents always play dove (resp., hawk).

Let $\overline{g}\in(0,1)$ (resp., $\overline{l}\in(0,1)$) be sufficiently small (resp., large) such that  $$(k-1)\overline{g}<1-l \textrm{ and } \frac{\overline{g}}{1+\overline{g}-l}<\underline{p}_k^q$$ $$(\textrm{resp.,} (k-1)(1-\overline{l})<g \textrm{ and }  \frac{g}{1+g-\overline{l}}>\overline{p}_k^q).$$  Fix any $0<g<\overline{g}$ (resp., any $\overline{l}<l<1$). The definition of $\underline{p}_k^q$ (resp.,  $\overline{p}_k^q$) implies that the symmetric stationary state $(p_\theta,p_\theta)$ in any population with $\theta(k)=q$ must satisfy $p_\theta>\underline{p}_k^q$ (resp., $p_\theta<\overline{p}_k^q$). Observe that the best reply to the true distribution of hawks in the population in state $(p_\theta,p_\theta)$
is to play $d$ (resp., $h$) because $\frac{g}{1+g-l}<p_\theta$ (resp., because $\frac{g}{1+g-l}>p_\theta$). The central limit theorem implies that new agents with sufficiently large samples almost always play $d$ (resp., $h$) in a neighborhood of $p_\theta.$ This implies that for each $\epsilon>0,$ there exists $\overline{k}$ such that $|w'_n(p_\theta)|<\epsilon$ for any $n\geq\overline{k}$. 
Let $\epsilon>0$ be sufficiently small such that $q\cdot k<1-\epsilon$.  This implies that:
$$|w'_\theta(p_\theta)|<(1-q)\cdot\epsilon+q\cdot|w'_k(p_\theta)|<\epsilon+q\cdot|((1-p)^k)'|=\epsilon+q\cdot|k\cdot(1-p)^{k-1}|<\epsilon+q\cdot k<1.$$
$$\left(|w'_\theta(p)|=<(1-q)\cdot\epsilon+q\cdot|w'_k(p_\theta)|<\epsilon+q\cdot|(1-p^k)'|=\epsilon+q\cdot|k\cdot{p}^{k-1}|<\epsilon+q\cdot k<1\right.)$$ This implies that $(p_\theta,p_\theta)$ is asymptotically stable due to  Proposition \ref{prop:pure-equilbiria-satbility}.   
\subsection{Proof of Theorem \ref{thm:stableinterior_PSD} (Stable Symmetric State, PSD)}\label{subsec:stableinterior_PSDproof}
Recall that the payoff-sampling dynamics in state $(p_1,p_2)$ are given by
$\dot{p}_1 = \delta(w_\theta(p_2) - p_1$ and $\dot{p}_2  = \delta(w_\theta(p_1) - p_2),$
and that a symmetric state $(p^{(\theta)},p^{(\theta)})$ is  stationary iff $p^{(\theta)}$ $w_\theta (p^{(\theta)})=p^{(\theta)}$. 


We now establish some properties of the payoff-sampling dynamics and the symmetric rest points for symmetric distributions of types $\theta \equiv k$. 
If $l,1-g \in \left(0, \frac{1}{\max(\textrm{supp}(\theta))}\right),$ action $h_i$ has a higher mean payoff iff the number of $d_j$-s in the $h_i$-sample is strictly greater than half the number of $d_j$-s in the $d_i$-sample. To express $w_k(p)$ concisely in this case, we define $X(k,p)$ and $Y(k,p)$ to be independent and identically distributed binomial random variables with parameters $k$ and $p.$ We can then write $w_k(p)$ as follows:
\begin{equation}
w_k(p) = P\left(k-X(k,p) > \frac{1}{2}(k-Y(k,p))\right)= P(2X(k,p)-Y(k,p) < k).
\end{equation}
Observe that $w_k(p)$ is a polynomial in $p$ of degree at most $2\cdot k.$ We have verified the following facts about these polynomials for $k<20$ (for an illustration see Figure \ref{fig:proofbypicture}; the Mathematica code is given in the online supplementary material, and the explicit values of the rest points and the derivatives are presented in Table \ref{table:fixedpoints}):
\begin{itemize}
    \item For $k \in \left\{1,2,\dots,18,19\right\},$ $w_k(p)$ is decreasing in $p.$
    \item For $k \in \left\{1,2,\dots,18,19\right\},$ $w_k(p)$ has a unique fixed point $p^{(k)}$.\\
    Moreover, $0.5 < p^{(k)} < 0.68$ for any $k \in \left\{1,2,3,4,5\right\}.$ 
    \item $|w_1'(p)| \equiv 1,$ and $|w_k'(p)| < 1$ for any $k \in \left\{2,3,4,5\right\}$ and $0.5<p<0.68.$ 
\end{itemize}

Recall that $w_\theta(p)$ is a convex combination of the $w_k(p)$ for the $k$-s in its support 
(i.e., $w_\theta(p)=\sum_{k}\theta\left(k\right)\cdot w_{k}\left(p\right)$).  From the above facts, it follows that:
\begin{enumerate}
    \item  For $\theta \equiv k < 20,$ the function $w_k(p)$ has a unique fixed point $p^{(k)}$ such that $|w'_k(p^{(k)})| < 1,$ which implies that $(p^{(k)}, p^{(k)})$ is asymptotically stable.
    \item For $\max(\textrm{supp}(\theta)) \leq 5,$ the function $w_{\theta}(p)$ has a unique fixed point $p^{(\theta)}$ such that $p^{(\theta)} \in (0.5,0.68)$ and $|w'_\theta(p^{(\theta)})| < 1$ if $\theta(1) \neq 1,$ 
    which implies that $(p^{(\theta)}, p^{(\theta)})$ is asymptotically stable.
\end{enumerate}

\begin{table}[h]
\caption{Fixed Points of the Function $w_k(p)$ in the Proof of Theorem \ref{thm:stableinterior_PSD}} 
\centering

\begin{tabular}{|c|c|c|c|c|c|c|c|c|c|c|}
\hline 
$k$ & 1 & 2 & 3 & 4 & 5 & 6 & 7 & 8 & 9 & 10\tabularnewline
\hline 
$p^{\left(k\right)}$ & 0.500 & 0.579 & 0.620 & 0.649 & 0.672 & 0.690 & 0.706 & 0.720 & 0.731 & 0.741\tabularnewline
\hline 
$\ensuremath{|w'_{k}(p^{\left(k\right)})|}$ & 1 & 0.690 & 0.618 & 0.645 & 0.690 & 0.730 & 0.763 & 0.793 & 0.818 & 0.840 \tabularnewline
\hline 
\end{tabular}

\medskip{}

\begin{tabular}{|c|c|c|c|c|c|c|c|c|c|c|}
\hline 
$k$ & 11 & 12 & 13 & 14 & 15 & 16 & 17 & 18 & 19 & 20\tabularnewline
\hline 
$p^{\left(k\right)}$ & 0.750 & 0.758 & 0.765 & 0.773 & 0.778 & 0.784 & 0.789 & 0.794 & 0.799 & 0.803\tabularnewline
\hline 
$\ensuremath{|w'_{k}(p^{\left(k\right)})|}$ & 0.861 & 0.88 & 0.899 & 0.916 & 0.932 & 0.948 & 0.963 & 0.978 & 0.991 & 1.001 \tabularnewline
\hline 
\end{tabular}

 \label{table:fixedpoints} 
\end{table}

\begin{table}[h]
\caption{Values of $|w_k'(p^{(j)})|$ for $k,j \in \{1,2,3,4,5\}.$} 
\centering
\begin{tabular}{ |p{0.7cm}|p{1.5cm}|p{1.5cm}|p{1.5cm}|p{1.5cm}|p{1.5cm}|  }
 \hline
 $_{k} \backslash ^{p^{(j)}}$& $p^{(1)}$ &$p^{(2)}$&$p^{(3)}$ &$p^{(4)}$&$p^{(5)}$\\
 \hline
 1   & 1   &1 & 1    &1 & 1\\
 \hline
 2   & 0.5    &0.690 & 0.812    &0.905& 0.981\\
 \hline
 3   & 0.562    &0.560 & 0.618    &0.687 & 0.759\\
 \hline
 4   & 0.625    &0.616 & 0.623    &0.645 & 0.679\\
 \hline
 5   & 0.605    &0.642 & 0.659    &0.673 & 0.690\\
 \hline
\end{tabular}
 \label{table:absvalue_fixpoints} 
\end{table}
\spacing{1.213}
\bibliographystyle{chicago}
\bibliography{mybibdata}
\end{document}